\numberwithin{equation}{section}
\newtheorem{thm}{Theorem}[section]
\newtheorem{prop}[thm]{Proposition}
\newtheorem{lem}[thm]{Lemma}
\newtheorem{remark}[thm]{Remark}
\renewcommand{\a}{\alpha}
\renewcommand{\b}{\beta}
\renewcommand{\c}{\nabla}
\newcommand{\g}{\gamma}
\newcommand{\C}{\mathscr{C}}
\newcommand{\e}{\epsilon}
\newcommand{\ve}{\varepsilon}
\newcommand{\vp}{\varphi}
\renewcommand{\l}{\lambda}
\renewcommand{\L}{\Lambda}
\renewcommand{\k}{\kappa}
\newcommand{\m}{\mu}
\newcommand{\n}{\nu}
\renewcommand{\d}{\delta}
\newcommand{\w}{\omega}
\newcommand{\Om}{\Omega}
\renewcommand{\t}{\tau}
\newcommand{\T}{\Theta}
\newcommand{\U}{\Upsilon}
\newcommand{\X}{{\sf X}}
\newcommand{\Y}{{\sf Y}}
\newcommand{\Z}{{\sf Z}}
\newcommand{\p}{\partial}
\newcommand{\M}{\mathscr{M}}
\newcommand{\mc}{\mathcal}
\newcommand{\mf}{\mathfrak}
\newcommand{\mbb}{\mathbb}
\newcommand{\R}{\mathbb{R}}
\newcommand{\s}{\mathfrak{s}}
\newcommand{\mr}{\mathring}
\newcommand{\teuk}{\ensuremath{\hspace{.05cm}\mathaccent\Box{\text{\tiny \textsc{T}}}\hspace{.05cm}}}     %% modified wave operator
\newcommand{\wt}{\widetilde}
\newcommand{\wh}{\widehat}
\DeclareMathOperator{\tho}{\text{\rm\th}}
\DeclareMathOperator{\edt}{\text{\rm\dh}}
\begin{document}

\title{Conformal invariance, complex structures, and the Teukolsky connection}

\author{Bernardo Araneda\footnote{E-mail: \texttt{baraneda@famaf.unc.edu.ar}} \\ 
 \\
Facultad de Matem\'atica, Astronom\'{\i}a y F\'{\i}sica\\
Universidad Nacional de C\'ordoba\\ 
Instituto de F\'{\i}sica Enrique Gaviola, CONICET\\
Ciudad Universitaria, (5000) C\'ordoba, Argentina 
}

\date{May 29, 2018}

\maketitle

\begin{abstract}
We show that the Teukolsky connection, which defines generalized wave operators governing the behavior 
of massless fields on Einstein spacetimes of Petrov type D, has its origin in a distinguished conformally and 
GHP covariant connection on the conformal structure of the spacetime. 
The conformal class has a (metric compatible) integrable almost-complex structure under which the Einstein 
space becomes a complex (Hermitian) manifold. 
There is a {\em unique} compatible Weyl connection for the conformal structure,
and it leads to the construction of a conformally covariant GHP formalism and a generalization of it 
to weighted spinor/tensor fiber bundles. In particular,
`weighted Killing spinors', previously defined with respect to the Teukolsky connection, are shown to have 
their origin in the GHP-Weyl connection, and we show that the type D
principal spinors are actually {\em parallel} with respect to it.
Furthermore, we show that the existence of a conformal Killing-Yano tensor can be thought to be a 
consequence of the presence of a K\"ahler metric in the conformal class.
These results provide an interpretation of the persistent hidden symmetries appearing in black hole 
perturbations.
We also show that the preferred Weyl connection allows a natural injection of spinor fields into local 
twistor space and that this leads to the notion of {\em weighted local twistors}.
Finally, we find conformally covariant operator identities for massless fields and the corresponding 
wave equations.
\end{abstract}

\section{Introduction}

Let $(\M,g_{ab})$ be a 4-dimensional spacetime with a spinor structure and Levi-Civita connection $\c_a$.
Zero rest mass free fields of spin $\s\in\mbb{N}_0/2$ are totally symmetric spinors 
$\vp_{A_1...A_{2\s}}=\vp_{(A_1...A_{2\s})}$ satisfying the field equations
\begin{align}
 \c^{A_1'A_1}\vp_{A_1...A_{2\s}}=0, \qquad \text{for } \s>0, \label{feq} \\
 \Box\vp=0, \qquad \text{for } \s=0, \label{weq}
\end{align}
where $\Box=g^{ab}\c_{a}\c_{b}$.
In Minkowski spacetime $\mbb{M}=\R^{1,3}$, these equations have their origin in the study of the massless 
irreducible representations of the universal covering of the Poincar\'e group. 
In an arbitrary curved spacetime the most interesting cases of the equations
correspond to spin $0$, $1/2$, $1$ and $2$, 
and our interest on them comes from studies related to the black hole stability problem.
Penrose has shown in \cite{Penrose1965} that all solutions of (\ref{feq}) in $\mbb{M}$
can be obtained from solutions of the wave equation (\ref{weq}). The process makes use of covariantly
constant spinor fields, so it does not generalize to arbitrary curved spacetimes. 
An alternative mechanism of `spin lowering/raising', potentially generalizable to other spacetimes,
is obtained from the use of {\em Killing spinors} \cite{Penrose2}, 
which are totally symmetric spinor fields $\sigma^{A_1...A_n}$ satisfying
\begin{equation}\label{kseq}
 \c_{A'}{}^{(A}\sigma^{B_1...B_n)}=0.
\end{equation}
If $\sigma^{A_1...A_n}$ is a solution to this equation, and $\vp_{A_1...A_n}$ solves (\ref{feq}), 
then (in Minkowski) $\Phi=\sigma^{A_1...A_n}\vp_{A_1...A_n}$ solves $\Box\Phi=0$.
These objects are closely related to twistor theory; in particular, the `primary spinor part' of 
any symmetric twistor ${\sf Z}^{\a_1...\a_n}$ (see section \ref{sec-wlt} for notation)
in $\mbb{M}$ satisfies (\ref{kseq}). 
Killing spinors are also closely associated to the notion of `hidden symmetries' in General Relativity, since 
the tensorial version of (\ref{kseq}) is related to the conformal Killing-Yano equation.
More precisely, Einstein spacetimes of Petrov type D have a 2-index solution $K^{AB}$ to (\ref{kseq}),
and this object (whose tensorial analogues are conformal Killing-Yano tensors) is responsible for the 
integrability of geodesic motion and the separability of the Klein-Gordon and Dirac equations on the Kerr 
spacetime.

In a curved spacetime, on physical grounds one would expect that, for massless fields of spin 1 and 2, it might 
be possible to encode the whole dynamics of the field in a complex scalar field, since it is well known that the 
dynamical parts of the electromagnetic and gravitational fields have only two degrees of freedom. 
Instead of the usual wave equation (\ref{weq}), the complex scalar field would satisfy a `more sophisticated' 
wave-like equation deduced from (\ref{feq}), in which $\Box$ is replaced by some normally hyperbolic 
operator\footnote{i.e. a second order linear differential operator whose principal symbol is
the spacetime metric.} plus some potential related to the curvature. 
In the context of black hole perturbations, the background spacetime belongs to the type D class, 
so that there are two preferred null directions $o^A,\iota^A$ singled out by the geometry, and a unique non-trivial 
Weyl scalar $\Psi_2$. The GHP formalism \cite{GHP} is then especially suited for this situation.
The wave-like equations are known as Teukolsky, Regge-Wheeler, Fackerell-Ipser, etc.; the specific 
equation depends on the spin $\s$ and spin weight $s$ of the variables considered, and has the general form
\begin{equation}\label{teukeq}
 (\teuk_{2s}+V_{\s,s})[\Psi^k_2 \; \Phi^{(s)}]=0,
\end{equation}
where $k\in\mbb{Z}/3$, $V_{\s,s}$ is a complex potential, and $\Phi^{(s)}$ is the spin weight $s$ component of 
$\vp_{A_1...A_{2\s}}$ in the principal dyad.
The weighted wave operator $\teuk_p$ acting on GHP type $\{p,0\}$ quantities is defined as 
\begin{equation}
 \teuk_p:=g^{ab}D_{a}D_{b},
\end{equation}
where $D_{a}:=\T_{a}+pB_{a}$ (for $p\in\mbb{Z}$) is a modified GHP covariant derivative, with $\T_a$ the 
usual GHP derivative (see e.g. \cite{GHP}) and $B_a:=-\rho n_a+\t\bar{m}_a$ a modified GHP connection 1-form introduced 
in \cite{Andersson2011}. From now on we will refer to $D_a$ as the {\em Teukolsky connection}.
The form (\ref{teukeq}) of the equations makes manifest their wave-like nature, which ensures the well-known 
properties of solutions of this kind of equations such as existence, uniqueness, and Cauchy stability in 
globally hyperbolic regions.
Furthermore, similarly to the flat case, the field $\Psi^k_2\Phi^{(s)}$ solving (\ref{teukeq}) is of the form 
$\Psi^k_2\Phi^{(s)}=P^{A_1...A_{2\s}}_s\vp_{A_1...A_{2\s}}$, where now $P^{A_1...A_{2\s}}_s$ is given by
\begin{equation}\label{P}
 P^{A_1...A_{2\s}}_s=\Psi^{-(\s-s)/3}_2\iota^{(A_1}...\iota^{A_{\s-s}}o^{A_{\s-s+1}}...o^{A_{2\s})}.
\end{equation}
Interestingly enough, it was found in \cite{Araneda2017} that (\ref{P}) solves the equation
\begin{equation}\label{mks}
 D_{A'}{}^{(A}P^{B_1...B_{2\s})}_s=0,
\end{equation}
and therefore it can be considered as a Killing spinor with respect to the Teukolsky connection 
(in \cite{Araneda2017} these objects were called `weighted' Killing spinors). 
The ordinary Killing spinor $K^{AB}=\Psi^{-1/3}_2o^{(A}\iota^{B)}$ of type D spacetimes mentioned above 
is just a particular case of (\ref{P})-(\ref{mks}).
This fact suggests that there should be a deeper geometrical understanding of the Teukolsky connection, 
and of its relation with the `generalized' hidden symmetries coming from (\ref{mks}), that (as far as we know)
seems to be currently unknown. 
Since the Teukolsky equations (\ref{teukeq}) are central to the linear approach to the black hole 
stability problem (one of the major open issues in classical General Relativity),
one of the main goals of this article is to provide such a geometrical understanding.

Now, the relation between (\ref{teukeq}) and (\ref{feq}) in the context of perturbations 
of a type D Einstein spacetime can be formulated as follows.
Consider a family of 4-dimensional Lorentzian manifolds $(\M,g_{ab}(\ve))$ such that $g_{ab}(0)$ is an
 Einstein spacetime 
of Petrov type D. In order to get a second order, scalar equation from (\ref{feq}), one has to apply a first order 
differential operator contracting all spinor indices. 
Let $\s=\frac{1}{2}, 1, 2$ and $s=0,\pm \s$. It was found in \cite{Araneda2016} that, without assuming that 
any field equations are satisfied, one has the operator identity
\begin{equation}\label{mainid0}
 P^{A_1...A_{2\s}}_{s}(\c_{A'_1A_1}-2\s A_{A'_1A_1})\c^{A'_1B}\vp_{A_2...A_{2\s} B}\doteq
 (\teuk_{2s}+V_{\s,s})[P^{A_1...A_{2\s}}_s\vp_{A_1...A_{2\s}}],
\end{equation}
where ``$\doteq$'' means equality up to linear order in $\ve$ (of course this is only needed for spin $\s=2$, in 
which the corresponding field represents the linearized Weyl curvature spinor; see \cite{Araneda2016} for details), 
and the 1-form $A_a$ on the left is given 
by\footnote{In \cite{Araneda2016} this 1-form was denoted as $-A_a$, and in \cite{Araneda2017} as $-\g_a$.}
\begin{equation}\label{Apsi2}
 A_{a}=\Psi^{-1/3}_2\c_{a}\Psi^{1/3}_2.
\end{equation}
As we want to understand the geometrical structure of (\ref{teukeq}) and its relation to (\ref{mks}), 
the identity (\ref{mainid0}) turns out to be appropriate for analyzing this problem since it holds 
independently of the field equations.
From this point of view we must also understand the left hand side of (\ref{mainid0}) in geometrical terms, 
which in particular raises the question of what is the interpretation of the 1-form (\ref{Apsi2});
namely, is it a connection form on some bundle, and if so, related to which symmetry? 
Moreover, a complete understanding of the symmetry structures underlying (\ref{mainid0}) 
should give precise relations (geometrical or 
otherwise) between all objects involved in it: the spinor $P^{A_1...A_{2\s}}_s$ solving (\ref{mks}), 
the 1-form $A_a$, and the Teukolsky connection $D_a$ and its associated wave operator $\teuk_p$.
The aim of this work is to answer all these questions.

\subsection{Main results and overview}

In this article we prove that all the just mentioned questions can be answered by taking into account 
a simple observation: the field equations (\ref{feq}) are conformally invariant. 

An appropriate setting for studying the present problem is then the conformal manifold $(\M,[g_{ab}])$ 
associated to our original spacetime. The analogue of the Levi-Civita connection is a Weyl connection 
for the conformal structure. 
Type D spacetimes have a naturally defined, metric compatible,
almost-complex structure $J$ (which is actually integrable, thus these are complex Hermitian
manifolds), and there is a {\em unique} Weyl connection compatible with it. 
In section \ref{sec-cs} we show that (\ref{Apsi2}) arises naturally in this way.

The (integrable) almost-complex structure is actually parallel with respect to the Levi-Civita connection of 
a particular member of the conformal class, thus this member is a K\"ahler manifold. In section 
\ref{sec-kahler} we show that this implies the existence of a conformal Killing-Yano tensor 
(equivalently, a Killing spinor) in the original spacetime.

Furthermore, fixing a pair of null directions on the conformal manifold and using the preferred Weyl connection, 
in section \ref{sec-ccghp} we construct a conformally covariant GHP formalism generalized to be valid for all 
spinor/tensor conformal densities, and we show that the naturally induced covariant derivative on associated 
vector bundles {\em is precisely} the Teukolsky connection. 

The natural Weyl-GHP connection also allows us to introduce a geometrically well defined and meaningful 
notion of {\em weighted Killing spinors}, this gives (\ref{mks}) as a consequence of the stronger result 
that the canonical type D principal spinors are {\em parallel} in this more general setting (no powers of $\Psi_2$ 
needed); we prove this in section \ref{sec-wks}.

In section \ref{sec-wlt} the preferred Weyl connection is shown to give a natural injection of spinor 
fields into local twistor space (and a corresponding conformally invariant short exact sequence).
This way we are immediately led to the consideration of {\em weighted local twistors} and to a 
re-derivation of the concept of weighted Killing spinors.

We then find in section \ref{sec-cci} conformally covariant operator identities on spinor fields, which 
give the relation between field equations and conformally-GHP covariant wave equations, thus giving 
a version of (\ref{mainid0}) in a `unified' geometrical framework.

We show the explicit relation with Teukolsky wave operators in section \ref{sec-rt}, and with 
a generalized Laplace-de Rham operator (introduced in \cite{Araneda2017}) in section \ref{sec-laplace}.

Appendix \ref{appendix} contains some details of the calculations left out of the main text 
in section \ref{sec-spin2}.

\subsection{Notation and conventions}

We will work on (conformal) 4-dimensional Lorentzian spin spacetimes, with metric signature $(+---)$. 
Our conventions for tensor, spinor and twistor indices (and for curvature tensors, etc.)
are the same as those of Penrose $\&$ Rindler \cite{Penrose1, Penrose2}. We will denote as 
$\mbb{S}^{A... B'...}_{C... D'...}$ the space of spinor fields (i.e. sections of spinor bundles) of 
the type indicated by the indices.
We will furthermore use the notation $\mbb{S}^{A... B'...}_{\{p,q\}\;C... D'...}[w]$ for spinor fields with 
conformal weight $w$ and GHP type $\{p,q\}$.

\section{Conformal and complex structures}

\subsection{Preliminaries}

We recall some basic facts about conformal geometry, since this will be useful in the following sections.
Let $(\M,g_{ab})$ be a 4-dimensional spin spacetime with spinor metric $\e_{AB}$. 
The conformal manifold associated to this spacetime is $(\M,[g_{ab}])$, where $[g_{ab}]$ is the 
equivalence class of 
metrics that are conformally related to $g_{ab}$, i.e. $\wh{g}_{ab}\in[g_{ab}]$ if and only if there exists 
a smooth, nowhere vanishing positive scalar function $\Om$ such that $\wh{g}_{ab}=\Om^2 g_{ab}$. 
The map $g_{ab}\mapsto\wh{g}_{ab}=\Om^2 g_{ab}$ is called conformal transformation. 
For the spinor metric, this is equivalent to $\e_{AB}\mapsto\wh\e_{AB}=\Om \e_{AB}$. 
For the inverse spacetime and spinor metrics, we have $\wh{g}^{ab}=\Om^{-2}g^{ab}$ and 
$\wh\e^{AB}=\Om^{-1}\e^{AB}$.
If $\{o^A,\iota^A\}$ is a spin frame (i.e. $\e_{AB}=2o_{[A}\iota_{B]}$), the spinors transform as
\begin{equation}\label{ctsf}
 \wh{o}^A=\Om^{w_0}o^A, \qquad \wh{\iota}^A=\Om^{w_1}\iota^A,
\end{equation}
for some numbers $w_0,w_1\in\R$ such that $w_0+w_1+1=0$. 
For the associated null tetrad $\{\ell^a,n^a,m^a,\bar{m}^a\}$ this is equivalent to
\begin{equation}
 \wh{\ell}^a=\Om^{2w_0}\ell^a, \qquad \wh{n}^a=\Om^{2w_1}n^a, \qquad \wh{m}^a=\Om^{w_0+w_1}m^a, 
 \qquad \wh{\bar{m}}{}^a=\Om^{w_0+w_1}\bar{m}^a.
\end{equation}
We will eventually choose particular values for $w_0$ and $w_1$.

Metrics conformally related to $g_{ab}$ can be viewed as a subbundle $\mc{Q}\subset T^{*}\M\odot T^{*}\M$ 
with fibers $\R^{+}$, which in turn can be understood as a principal bundle over $\M$ with structure 
group\footnote{we denote as $\R^{\times}$ (resp. $\R^{+}$) the multiplicative group of real (resp. positive) 
numbers.} $\R^{+}$ (see e.g. \cite[Section 2.4]{Curry2014}). 
Then one can construct vector bundles associated to $\mc{Q}$ known as conformally weighted line bundles, 
denoted by $\mc{E}[w]$, where $w\in\R$ is called conformal weight, and whose elements transform under 
conformal rescaling as $\wh\phi=\Om^w\phi$. Sections of $\mc{E}[w]$ are known as conformal densities 
of weight $w$.
More generally, if $\mbb{E}$ is a vector bundle over $\M$, one can construct the weighted bundle 
$\mbb{E}\otimes\mc{E}[w]=:\mbb{E}[w]$, the sections of which will be conformally weighted spinor/tensor fields.
In particular, $\mbb{S}^{A...B'...}_{C...D'...}[w]$ will denote the space of spinor fields with conformal weight $w$.

If $\c_{a}$ is the Levi-Civita connection of $g_{ab}$, then under a conformal transformation, the Levi-Civita 
connection of the new metric $\wh{g}_{ab}=\Om^2 g_{ab}$ acting on a tensor $T^{b_1...b_k}_{c_1...c_l}$ is
\begin{align}
\nonumber \wh{\c}_a T^{b_1...b_k}_{c_1...c_l} = \;\c_aT^{b_1...b_k}_{c_1...c_l} & 
 +K_{a}{}^{b_1}{}_{d}T^{d...b_k}_{c_1...c_l}+...+K_{a}{}^{b_l}{}_{d}T^{b_1...d}_{c_1...c_l} \\
 & -K_{a}{}^{d}{}_{c_1}T^{b_1...b_k}_{d...c_l}-...-K_{a}{}^{d}{}_{c_l}T^{b_1...b_k}_{c_1...d}, \label{ctd}
\end{align}
where
\begin{equation}
 K_{a}{}^{b}{}_{c}:=g^{bd}(\U_{a}g_{dc}+\U_{c}g_{da}-\U_{d}g_{ab}),
\end{equation}
with $\U_{a}:=\Om^{-1}\c_{a}\Om$. 
For a spinor $\Psi^{B_1...B_k}_{C_1...C_l}$, the corresponding formula is
\begin{align}
\nonumber \wh{\c}_a \Psi^{B_1...B_k}_{C_1...C_l}= \c_a\Psi^{B_1...B_k}_{C_1...C_l} & 
 + \L_{aD}{}^{B_1} \Psi^{D...B_k}_{C_1...C_l}+...+ \L_{aD}{}^{B_k} \Psi^{B_1...D}_{C_1...C_l} \\
 & -\L_{aC_1}{}^{D}\Psi^{B_1...B_k}_{D...C_l}-...-\L_{aC_l}{}^{D}\Psi^{B_1...B_k}_{C_1...D}, \label{ctd-spinor}
\end{align}
where
\begin{equation}
 \L_{aC}{}^{B}:=\U_{A'C}\e_{A}{}^{B}.
\end{equation}
For spinors with primed indices, the corresponding formula is deduced from (\ref{ctd-spinor}) 
by complex conjugation, and taking into account that the 1-form $\U_a$ is real.
The relation between $K_{a}{}^{b}{}_{c}$ and $\L_{aC}{}^{B}$ is given by
$K_{a}{}^{b}{}_{c}=\L_{a C}{}^{B}\bar\e_{C'}{}^{B'}+\bar{\L}_{a C'}{}^{B'}\e_{C}{}^{B}$.
See \cite[section 4.4]{Penrose1}.

The various parts of the curvature of $g_{ab}$ have different behaviors under conformal transformations.
In particular, the Weyl spinor can be shown to be conformally invariant, 
\begin{equation}
 \wh{\Psi}_{ABCD}=\Psi_{ABCD}
\end{equation}
(see e.g. \cite[section 6.8]{Penrose2}), 
and therefore its algebraic structure is common to all metrics in the conformal class $[g_{ab}]$. 
This can also be seen from the conformal behavior of the GHP spin coefficients: 
$\k,\k',\sigma,\sigma'$ are conformal densities, and by the Goldberg-Sachs theorem (valid for vacuum 
solutions with cosmological constant), algebraic speciality is equivalent to the existence of a 
shear-free null geodesic congruence:
\begin{equation}\label{as}
 \k=\sigma=0.
\end{equation}
(In particular, type D spaces have $\k=\k'=\sigma=\sigma'=0$.)
The conformal behavior of the other parts of the curvature can be conveniently described by means of the 
{\em Schouten tensor}, which according to our conventions (\cite{Penrose2}) is defined by
\begin{equation}\label{schouten}
 P_{ab}:=-\tfrac{1}{2}(R_{ab}-\tfrac{R}{6}g_{ab}),
\end{equation}
and whose relation with the Riemann tensor is $R_{ab}{}^{cd}=C_{ab}{}^{cd}+4P_{[a}{}^{[c}g_{b]}{}^{d]}$.
Under conformal transformations, $P_{ab}$ changes as
\begin{equation}\label{CTschouten}
 \wh{P}_{AA'BB'}=P_{AA'BB'}-\c_{BB'}\U_{AA'}+\U_{AB'}\U_{BA'}.
\end{equation}

\subsubsection{Weyl connections}

Below we will need the concept of a {\em Weyl connection} for the conformal structure associated to $g_{ab}$,
which is a torsion-free connection $\slashed{\c}_{a}$ such that
\begin{equation}\label{wconn}
 \slashed{\c}_{a}g_{bc}=-2f_{a}g_{bc}
\end{equation}
for some 1-form $f_{a}$. Assuming $\slashed{\c}_a$ to be fixed and calculating the conformal transformation 
of (\ref{wconn}), we find that $f_a$ has the conformal behavior
\begin{equation}
 f_{a} \mapsto \wh{f}_a= f_{a}-\U_{a}.
\end{equation}
We allow $f_{a}$ to be a complex 1-form (see section \ref{sec-cs}). 
When acting on tensors, the relation between $\slashed{\c}_a$ and the Levi-Civita connection is given by a 
formula analogous to (\ref{ctd}), but replacing $K_{a}{}^{b}{}_{c}$ with the tensor
\begin{equation}\label{Qcc}
 Q_{a}{}^{b}{}_{c}:=g^{bd}(f_{a}g_{dc}+f_{c}g_{da}-f_{d}g_{ab}).
\end{equation}
The action of $\slashed{\c}_a$ on spinors is analogous to (\ref{ctd-spinor}) and its complex conjugate, but 
replacing $\L_{aC}{}^{B}$ and $\bar{\L}_{aC}{}^{B}$ respectively for
\begin{equation}\label{Wcc}
 W_{a C}{}^{B}=f_{A'C}\e_{A}{}^{B}, \qquad  \wt{W}_{a C'}{}^{B'}=f_{AC'}\bar\e_{A'}{}^{B'}.
\end{equation}
We must use the object $\wt{W}_{a C'}{}^{B'}$ instead of $\bar{W}_{a C'}{}^{B'}$ because the 1-form $f_a$ 
can be complex, i.e. $\bar{f}_a\neq f_{a}$
(note that in the case $f_a$ real, we have $\wt{W}_{a C'}{}^{B'}=\bar{W}_{a C'}{}^{B'}$).

If $\Psi$ is an arbitrary spinor/tensor field with conformal weight $w\neq0$ and $\slashed{\c}_a$ is a 
Weyl connection, then $\slashed{\c}_{a}\Psi$ is not a conformal density.
A covariant derivative operator that maps conformal densities of weight $w$, to conformal densities 
(of the same weight), can be constructed as
\begin{equation}\label{ccd}
 \C_{a}\Psi=\slashed{\c}_{a}\Psi+wf_{a}\Psi.
\end{equation}
It then follows that $\wh{\C_{a}\Psi}=\Om^{w}\C_{a}\Psi$. We say that $\C_{a}$ is a {\em conformally 
covariant derivative}. More generally, we will say that a linear differential operator $P$ is {\em conformally 
covariant} if, when acting on conformal spinor/tensor densities $\Psi$ of weight $w$, it satisfies 
$\wh{P(\Psi)}=\Om^{w'}P(\Psi)$ for some $w'\in\R$.

\subsection{Almost-complex structure}\label{sec-cs}

Consider the conformal manifold $(\M,[g_{ab}])$ associated to $g_{ab}$ (whose Levi-Civita connection is $\c_a$). 
Suppose that there exists an almost-complex structure $J$: a linear map $J:T\M\to T\M$ such that 
$J^2=-\rm{Id}$ (the identity map on $T\M$). 
Assume also that $J$ is metric compatible, i.e. $g(JX,JY)=g(X,Y)$ for all $X,Y\in T\M$. 
In index notation, $J_{a}{}^{b}$ is a compatible almost-complex structure if
$J_{a}{}^{c}J_{c}{}^{b}=-\d^{b}_{a}$ and $J_{a}{}^{c}J_{b}{}^{d}g_{cd}=g_{ab}$.
(Note that these  conditions imply $J_{(ab)}=0$.) 
The resulting triple $(\M,[g_{ab}],J)$ is called a {\em conformal almost-Hermitian manifold}.
As discussed in \cite{Bailey1991} (see also e.g. \cite{Gover2013}), 
there is a {\em unique} Weyl connection $\slashed{\c}_a$ on $(\M,[g_{ab}],J)$ that is compatible with $J$, 
i.e. such that 
\begin{equation}
 \slashed{\c}_{b}J_{a}{}^{b}=0.
\end{equation}
Acting on an arbitrary tensor field, such a connection is given by a formula analogous to (\ref{ctd}) 
but replacing $K_{a}{}^{b}{}_{c}$ with (\ref{Qcc}), where the 1-form $f_a$ is canonically found to be
\begin{equation}\label{uwc}
 f_a=-\tfrac{1}{2}J_{b}{}^{c}\c_{c}J_{a}{}^{b}.
\end{equation}
Now, for any choice of $g_{ab}\in[g_{ab}]$, a spin frame $\{o^A,\iota^A\}$, $o_A\iota^A=1$, 
or equivalently the associated null tetrad $\{N^a_{\bf a}\}=\{\ell^a,n^a,m^a,\bar{m}^a\}$ (${\bf a}=0,...,3$), 
always defines an almost-complex structure as\footnote{note that (\ref{acs}) is a {\em complex-valued} tensor. 
For this reason, in \cite{Flaherty1976} it is referred to as a `modified' almost-complex structure.
A {\em real} almost-complex structure can be constructed as 
$J_{a}{}^{b}=-\ell_a\ell^b+n_a n^b-im_a\bar{m}^b+i\bar{m}_a m^b$, but this has a number of undesirable 
properties that make it unsuitable for our purposes, see \cite[Chapter VIII]{Flaherty1976}.}
\begin{equation}\label{acs}
 J_{a}{}^{b}=i(\ell_a n^b-n_a\ell^b+\bar{m}_a m^b-m_a\bar{m}^b).
\end{equation}
As far as we know, this structure was first found in \cite{Flaherty1974}; then rediscovered in 
\cite{Bailey1991} (see also \cite{Flaherty1976} and \cite[II.2.10]{Mason1995}).
In spinor form, (\ref{acs}) is $J_{a}{}^{b}=i(o_A\iota^B+\iota_A o^B)\bar{\e}_{A'}{}^{B'}$. 
Although (\ref{acs}) is well defined in any (4-dimensional) spacetime, 
it is particularly significant for the case of Petrov type D, since then we have a pair of null directions 
$o^A,\iota^A$ determined by the geometry and $J_{a}{}^{b}$ is naturally associated to this structure.

\begin{remark}[{\bf Chirality}]\label{remarkCh}
Note that we are choosing the spinors $o^A,\iota^A$ instead of $\bar{o}^{A'}, \bar\iota^{A'}$; 
the latter choice would give the equally valid almost-complex structure 
$\tilde{J}_{a}{}^{b}=i(\bar{o}_{A'}\bar\iota^{B'}+\bar\iota_{A'}o^{B'})\e_{A}{}^{B}$, and the corresponding 
1-form (\ref{uwc}) would now be $\bar{f}_a$. In this sense 
we can consider our treatment as {\em chiral}, since left- and right-handed spinors are not treated 
on an equal footing. This is because we want to study the {\em left-handed} fields (\ref{feq}).
\end{remark}

Note that (\ref{acs}) does not depend on the representatives of 
the conformal and GHP classes (i.e. it is invariant under conformal and GHP transformations). 
Note also that $J_{a}{}^{b}$ is compatible with the conformal structure: for any $g_{ab}\in[g_{ab}]$, we have 
$J_{a}{}^{c}J_{b}{}^{d}g_{cd}=g_{ab}$, therefore a type D spacetime is a conformal almost-Hermitian manifold 
with almost-complex structure (\ref{acs}).
The associated (unique) compatible Weyl connection (\ref{uwc}) is calculated to be
\begin{equation}\label{A}
 f_{a} =\rho n_{a}+\rho' \ell_{a}-\t\bar{m}_{a}-\t' m_{a}.
\end{equation}
Under conformal rescalings, we have of course $f_a \mapsto \wh{f}_a=f_a-\U_a$.

\begin{remark}[{\bf The 1-form (\ref{Apsi2})}]\label{remarkA}
 In an Einstein spacetime of Petrov type D, the Bianchi identities in GHP form are
 $\tho\Psi_2=3\rho\Psi_2$ and $\edt\Psi_2=3\t\Psi_2$ (together with the primed versions), 
 and they imply that the 1-form (\ref{A}) {\em coincides exactly} with (\ref{Apsi2}), 
 i.e. $f_a \equiv \Psi^{-1/3}_2\c_{a}\Psi^{1/3}_2 \equiv A_a$. 
 It is important to emphasize, however, that the Bianchi identity $\c^{A'A}\Psi_{ABCD}=0$ 
 {\em is not} conformally invariant; 
 therefore by writing (\ref{A}) in this form one is explicitly breaking conformal covariance.
\end{remark}

\subsection{K\"ahler structure}\label{sec-kahler}

An almost-complex manifold is actually a complex manifold if the almost-complex structure is integrable.
By the Newlander-Nirenberg theorem, the almost-complex structure is integrable if and only 
if the associated Nijenhuis tensor ${}^{J}N(X,Y)$ vanishes for all $X,Y\in T\M$, where
${}^{J}N(X,Y)\equiv [X,Y]+J[JX,Y]+J[X,JY]-[JX,JY]$ and $[\cdot,\cdot]$ is the Lie bracket of vector fields. 
In index notation, this is ${}^{J}N_{bc}{}^{a}=-J_{b}{}^{d}\p_{d}J_{c}{}^{a}+J_{c}{}^{d}\p_{d}J_{b}{}^{a}
 + J_{d}{}^{a}(\p_{b}J_{c}{}^{d}-\p_{c}J_{b}{}^{d})$. Note that, in this expression, 
 $\p_a$ can be replaced by any torsion-free 
connection; in particular, using the Weyl connection $\slashed{\c}_a$ applied to (\ref{acs}) we find
\begin{equation}\label{scj}
 \slashed{\c}_{a} J_{b}{}^{c}=2i\bar\e_{B'}{}^{C'}[(-\k n_a+\sigma\bar{m}_a)\iota_B\iota^C 
 +(-\k' \ell_a+\sigma' m_a)o_B o^C].
\end{equation}
The Nijenhuis tensor then vanishes for a type D spacetime (see (\ref{as})), 
thus the almost-complex structure is integrable and {\em the space can be regarded as a complex manifold}. 
Complex coordinates are obtained by integrating linear combinations of the type $(1,0)$ and $(0,1)$ forms
with respect to the decomposition\footnote{this decomposition is valid because the complex tensor $J_{a}{}^{b}$ 
has eigenvalues $+i,+i,-i,-i$; for this reason one does not consider $i\d_{a}{}^{b}$ as an almost-complex 
structure, since it has eigenvalues $+i,+i,+i,+i$ (and because it is not metric compatible).} 
$(T^{*}\M)^{\mbb{C}}=T^{*}\M^{+}\oplus T^{*}\M^{-}$ induced by $J$: 
$T^{*}\M^{+}$ is spanned by $\{\ell_a dx^a, m_a dx^a\}$, and $T^{*}\M^{-}$ by $\{n_a dx^a, \bar{m}_a dx^a\}$.
Now, since $J$ is compatible with $g_{ab}$, we actually get a Hermitian manifold. 
The 2-form defined by $\w(X,Y):=g(JX,Y)$, namely $\w_{ab}=J_{ab}$, is called {\em K\"ahler form}. 
A {\em K\"ahler manifold} is a Hermitian manifold $(\M,g_{ab})$ for which the K\"ahler 
form is closed, and it is a theorem that $\w$ is closed if and only if the almost-complex structure $J$ is 
parallel with respect to the Levi-Civita connection of $g_{ab}$. It can be checked that in general $J$ is not parallel 
with respect to $\c_a$; but from (\ref{scj}) we know that, for type D spacetimes, $J$ is parallel with 
respect to $\slashed{\c}_a$. Now, it is well known that a Weyl connection is actually the Levi-Civita connection
of some $\slashed{g}{}_{ab}\in[g_{ab}]$ if the 1-form $f_a$ is closed. For Einstein type D spaces, 
by Remark \ref{remarkA}
we know that $f_a=\c_a\log\Psi^{1/3}_2$ and hence it is closed, thus $\slashed{\c}_a$ is the Levi-Civita
connection of $\slashed{g}{}_{ab}:=\Psi^{2/3}_2 g_{ab}$. Therefore, $(\M,\slashed{g}{}_{ab})$ is a 
K\"ahler manifold, i.e. {\em Einstein spacetimes of Petrov type D are conformal (with generally complex 
conformal factor) to K\"ahler spaces} (see also \cite{Flaherty1976}).

We will now show that the existence of a K\"ahler metric in the conformal structure 
is directly related to the hidden symmetries associated to conformal Killing-Yano tensors and 
Killing spinors. A conformal Killing-Yano tensor is a 2-form $Z_{ab}$ satisfying the equation 
\begin{equation}\label{cky}
 \c_{(a}Z_{b)c}=g_{ab}\xi_c-g_{c(a}\xi_{b)},
\end{equation}
where (in dimension $d$) $\xi_a=-\tfrac{1}{(d-1)}\c_bZ_{a}{}^{b}$.
We have:

\begin{lem}\label{lem-kahler}
Let $(\M,[g_{ab}],J)$ be a $d$-dimensional conformal Hermitian manifold, and assume that there exists 
a K\"ahler metric $\wt{g}_{ab}$ in the conformal class. Then the metric $g_{ab}=\Om^{-2}\wt{g}_{ab}$
admits a conformal Killing-Yano tensor, given by $Z_{ab}=\Om^{-1}J_{ab}$, where $J_{ab}:=g_{bc}J_{a}{}^{c}$.
\end{lem}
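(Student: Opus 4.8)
The plan is to reduce the statement to the single structural fact that on a Kähler manifold the Kähler form is covariantly constant, and then to transport this parallel $2$-form conformally into $(\M,g_{ab})$. Write $\wt\w_{ab}:=\wt{g}_{bc}J_{a}{}^{c}$ for the Kähler form of $\wt{g}_{ab}$ and let $\wt\c_{a}$ denote its Levi-Civita connection. The Kähler hypothesis is equivalent to $\wt\c_{a}J_{b}{}^{c}=0$, and since $\wt\c_{a}\wt{g}_{bc}=0$ this gives $\wt\c_{a}\wt\w_{bc}=0$. Because the $(1,1)$-tensor $J_{a}{}^{b}$ is conformally invariant while indices are lowered with $\wt{g}_{ab}=\Om^{2}g_{ab}$, one has $\wt\w_{ab}=\Om^{2}J_{ab}$, so the candidate tensor of the Lemma is $Z_{ab}=\Om^{-1}J_{ab}=\Om^{-3}\wt\w_{ab}$. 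Conceptually $Z_{ab}$ is thus the conformal rescaling, by the appropriate $2$-form weight $\Om^{-3}$, of a parallel---hence conformal Killing-Yano with vanishing $\xi$---$2$-form, and the Lemma is an instance of the conformal covariance of (\ref{cky}); the remaining task is to make this explicit with the tools already at hand.

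First I would apply the transformation law (\ref{ctd}) for the Levi-Civita connection---with the rescaled metric taken to be $\wt{g}_{ab}=\Om^{2}g_{ab}$ and $\wt\c_{a}$ in the role of $\wh\c_{a}$---to the covariant $2$-tensor $\wt\w_{bc}$ and solve for $\c_{a}\wt\w_{bc}$. Using $K_{a}{}^{d}{}_{b}=\U_{a}\d^{d}_{b}+\U_{b}\d^{d}_{a}-\U^{d}g_{ab}$ with $\U_{a}=\Om^{-1}\c_{a}\Om$ and $\U^{d}:=g^{de}\U_{e}$, the vanishing $\wt\c_{a}\wt\w_{bc}=0$, and $\c_{a}\Om^{-3}=-3\Om^{-3}\U_{a}$, a short computation yields
\begin{equation}
 \c_{a}Z_{bc}=\Om^{-3}\big[-\U_{a}\wt\w_{bc}+\U_{b}\wt\w_{ac}+\U_{c}\wt\w_{ba}
 -g_{ab}\U^{d}\wt\w_{dc}-g_{ac}\U^{d}\wt\w_{bd}\big]. \nonumber
\end{equation}

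The decisive step is to symmetrize this expression over the pair $(a,b)$, where the antisymmetry $\wt\w_{ab}=-\wt\w_{ba}$ does all the work: the first two terms cancel each other, the third vanishes, and the remaining two assemble into the conformal Killing-Yano form $g_{ab}\xi_{c}-g_{c(a}\xi_{b)}$ of (\ref{cky}) with $\xi_{c}:=-\Om^{-3}\U^{d}\wt\w_{dc}$. I would close by confirming that this $\xi_{c}$ agrees with its defining relation: contracting (\ref{cky}) with $g^{ab}$ gives $-\c_{b}Z_{c}{}^{b}=(d-1)\xi_{c}$, i.e. $\xi_{c}=-\tfrac{1}{d-1}\c_{b}Z_{c}{}^{b}$, valid in every dimension $d$. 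I expect no genuine obstacle here; the only delicate point is the index and sign bookkeeping in the symmetrization, and the one conceptual fact that must be kept straight is that it is $J_{a}{}^{b}$, and not its lowered version $J_{ab}$, that is conformally invariant---this is exactly what converts the natural weight $\Om^{-3}$ relative to $\wt\w_{ab}$ into the weight $\Om^{-1}$ relative to $J_{ab}$ stated in the Lemma.
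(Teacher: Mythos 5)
Your proposal is correct and follows essentially the same route as the paper: both proofs push the parallelism of the complex structure under $\wt{\c}_a$ through the conformal change-of-connection formula (\ref{ctd}) and then symmetrize over $(a,b)$ to land on (\ref{cky}). The only differences are cosmetic --- you work with the lowered-index K\"ahler form $\wt\w_{bc}$ (hence the weight $\Om^{-3}$) and exhibit $\xi_c=-\Om^{-3}\U^{d}\wt\w_{dc}$ explicitly, whereas the paper works with the mixed tensor $J_{b}{}^{c}$ (weight $\Om^{-1}$) and writes $\xi$ as a divergence of $Z$.
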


\begin{proof}
Let $\wt{\c}_a$ be the Levi-Civita connection of the K\"ahler metric $\wt{g}_{ab}$, and let $\c_a$ be 
the corresponding one to the conformally related metric $g_{ab}=\Om^{-2}\wt{g}_{ab}$. 
In what follows, raising and lowering of indices is performed with $g_{ab}$ and its inverse $g^{ab}$.
By hypothesis we have $\wt{\c}_aJ_{b}{}^{c}=0$; therefore symmetrizing and using the relation (\ref{ctd}), 
we get
\begin{equation}
 \wt{\c}_{(a}J_{b)}{}^{c}=0=\Om\c_{(a}[\Om^{-1}J_{b)}{}^{c}]+g_{ab}\U^d J_{d}{}^{c}+g_{(a}{}^{c}J_{b)}{}^{d}\U_d.
\end{equation}
Using that $\Om\c_a[\Om^{-1}J_{b}{}^{a}]=-(d-1)J_{b}{}^{a}\U_a$ and defining $Z_{a}{}^{b}=\Om^{-1}J_{a}{}^{b}$, 
it is straightforward to show that the previous equation is equivalent to
\begin{equation}
 \c_{(a}Z_{b)}{}^{c}=-\tfrac{1}{(d-1)}g_{ab}\c_{d}Z^{cd}+\tfrac{1}{(d-1)}g_{(a}{}^{c}\c_{|d|}Z_{b)}{}^{d},
\end{equation}
thus $Z_{ab}$ is a conformal Killing-Yano tensor.
\end{proof}

In 4 dimensions, the 2-form $Z_{ab}$ in Lemma \ref{lem-kahler} can be written in spinor language 
as $Z_{ab}=\psi_{AB}\bar{\e}_{A'B'}+\chi_{A'B'}\e_{AB}$ for some symmetric spinors $\psi_{AB}$ and 
$\chi_{A'B'}$, and the conformal Killing-Yano equation (\ref{cky}) implies that $\psi_{AB}$ and 
$\chi_{A'B'}$ are Killing spinors.
Thus, the existence of these objects can be thought to be a consequence of the presence 
of a K\"ahler metric in the conformal class of the spacetime.

\subsection{Conformally covariant GHP formalism}\label{sec-ccghp}

Consider a fixed spacetime $(\M,g_{ab})$.
The GHP formalism is a framework suited for spacetimes in which two null directions can be tied to the geometry.
One adapts a null frame $\{N^a_{\bf a}\}=\{\ell^a,n^a,m^a,\bar{m}^a\}$ 
to the structure of null directions, in such a way that the remaining freedom for selecting 
a frame is reduced from $\rm{SO}(1,3)^{\uparrow}$ to a 2-dimensional subgroup given by 
$\R^{\times}\times\rm{U}(1)$ ($\cong\mbb{C}^{\times}$), whose action on a null tetrad is
\begin{equation}\label{GHP}
 \ell^a\to a\ell^{a}, \qquad n^a\to a^{-1}n^{a}, \qquad  m^a\to z m^{a}, 
 \qquad  \bar{m}^a\to \bar{z}\bar{m}^{a},
\end{equation}
for $a\in\R^{\times}$ and $z\in\rm{U}(1)$.
Spinors, tensors, equations, etc. are projected on the null frame, so that the components $\eta$ of an arbitrary
spinor/tensor field get changed under a GHP transformation (\ref{GHP}), in the form $\eta\mapsto\l^p\bar{\l}^q\eta$ 
for some $p,q\in\mbb{Z}$ (where $\l=(az)^{1/2}$); i.e. they form representations of the GHP group.
These quantities are known as GHP weighted quantities of type $\{p,q\}$,
and can be thought of as sections of vector bundles $\mbb{E}_{\{p,q\}}$ associated to the GHP representations.
The covariant derivative on $\mbb{E}_{\{p,q\}}$ is 
\begin{equation}
\T_{a}=\p_{a}+p\w_{a}+q\bar\w_{a}, 
\end{equation}
where $\w_{a}$ is the GHP connection
\begin{equation}
 \w_{a}=-\e n_{a}+\e'\ell_{a}-\b' m_{a}+\b\bar{m}_{a}.
\end{equation}

Consider now the conformal manifold $(\M,[g_{ab}])$, which is a structure weaker than the spacetime.
We wish to use the preferred Weyl connection (\ref{A}) to construct a conformally covariant GHP
formalism. Let $\pi_P:P\to\M$ be a principal fiber bundle over the conformal manifold,
whose typical fiber over $x\in\M$ is the set of basis $\{N^a_{\bf a}\}$ of $T_x\M$ such that for 
any $\wh{g}_{ab}\in[g_{ab}]$, there exists $\a>0$ such that $\wh{g}_{ab}N^a_{\bf a}N^b_{\bf b}=\a^2\n_{\bf ab}$,
where $\n_{\bf ab}$ is defined by $\n_{01}=1=-\n_{23}$, with all other components being zero.
The structure group of $P$ is $G=\rm{SO}(1,3)^{\uparrow}\times\R^{+}$ (see e.g. \cite{Friedrich1977}). 
The Weyl connection $\slashed{\c}_a$ on $(\M,[g_{ab}])$ allows to define a notion of parallel transport 
on $\M$, which in turn induces a (local) connection 1-form on $P$ given by 
$\slashed{\w}{}_{a}{}^{\bf b}{}_{\bf c}=N^{\bf b}_{b}\slashed{\c}_{a}N^{b}_{\bf c}$ 
(where $N^{\bf a}_a$ is the frame dual to $N^a_{\bf a}$).
Note that, since (\ref{A}) is generally complex, $\slashed{\w}{}_{a}{}^{\bf b}{}_{\bf c}$ will take values 
in the complexification of the Lie algebra $\mf{g}={\rm Lie}(G)$, namely 
$\slashed{\w}\in T^{*}\M\otimes\mf{g}^{\mbb{C}}$, with $\mf{g}^{\mbb{C}}:=\mf{g}\otimes\mbb{C}$.
Now, in the GHP formalism one fixes a pair of null directions $\ell^a,n^a$. The subgroup of $G$ that 
preserves these null directions in the conformal structure is 
$G_o=\R^{\times}\times{\rm U}(1)\times\R^{+}$, whose action on the frame $\{N^a_{\bf a}\}$ is now
\begin{equation}\label{tnt}
 \ell^a\to\Om^{2w_0}a\ell^{a}, \qquad n^a\to\Om^{2w_1}a^{-1}n^{a}, \qquad
 m^a\to\Om^{w_0+w_1}z m^{a}, \qquad \bar{m}^a\to\Om^{w_0+w_1}\bar{z}\bar{m}^{a},
\end{equation}
where $a\in\R^{\times}$, $z\in{\rm U}(1)$, $\Om\in\R^{+}$, and $w_0,w_1$ are two real 
numbers that satisfy $w_0+w_1+1=0$ (in accordance to (\ref{ctsf})). The reduction $G\to G_o$ 
gives in turn a reduction of $P$ to another principal bundle $B$ with structure group $G_o$, 
and the induced connection form on $B$ is obtained as the parts of $\slashed{\w}{}_{a}{}^{\bf b}{}_{\bf c}$ 
that do not transform covariantly under $G_o$. It is straightforward to check that these 
parts are $\slashed{\w}{}_{a}{}^{0}{}_{0}, \slashed{\w}{}_{a}{}^{1}{}_{1}, \slashed{\w}{}_{a}{}^{2}{}_{2}$ 
and $\slashed{\w}{}_{a}{}^{3}{}_{3}$, which transform as
\begin{align}
 & \slashed{\w}{}_{a}{}^{0}{}_{0} \to \slashed{\w}{}_{a}{}^{0}{}_{0}+2w_0\Om^{-1}\slashed{\c}_a\Om+a^{-1}\slashed{\c}_{a}a, \\
 & \slashed{\w}{}_{a}{}^{1}{}_{1} \to \slashed{\w}{}_{a}{}^{1}{}_{1}+2w_1\Om^{-1}\slashed{\c}_a\Om+a\slashed{\c}_{a}a^{-1}, \\ 
 & \slashed{\w}{}_{a}{}^{2}{}_{2} \to \slashed{\w}{}_{a}{}^{2}{}_{2}+\Om^{-1}\slashed{\c}_a\Om+z^{-1}\slashed{\c}_{a}z, \\ 
 & \slashed{\w}{}_{a}{}^{3}{}_{3} \to \slashed{\w}{}_{a}{}^{3}{}_{3}+\Om^{-1}\slashed{\c}_a\Om+z\slashed{\c}_{a}z^{-1}.
\end{align}
By looking at these transformation behaviors, we can isolate the parts that transform only under each separate 
subgroup in the product $G_o=\R^{\times}\times{\rm U}(1)\times\R^{+}$; these are
$w_1\slashed{\w}{}_{a}{}^{0}{}_{0}-w_0\slashed{\w}{}_{a}{}^{1}{}_{1}=:K_a$ for $\R^{\times}$,
$\frac{1}{2}(\slashed{\w}{}_{a}{}^{3}{}_{3}-\slashed{\w}{}_{a}{}^{2}{}_{2})=:L_a$ for ${\rm U}(1)$, and 
$\frac{1}{2}(\slashed{\w}{}_{a}{}^{2}{}_{2}+\slashed{\w}{}_{a}{}^{3}{}_{3})=f_a$ for $\R^{+}$, 
since we have
\begin{equation}
 K_a\to K_a+a\slashed{\c}_a a^{-1}, \qquad L_a\to L_a+z\slashed{\c}_az^{-1}, 
 \qquad f_a\to f_a+\Om\slashed{\c}_a\Om^{-1}.
\end{equation}
The connection form on $B$ is then $\psi_a=(K_a,L_a,f_a)$, which takes values in the complexified Lie algebra
$\mf{g}^{\mbb{C}}_o=(\R\oplus\mf{u}(1)\oplus\R)\otimes\mbb{C}\cong\mbb{C}\oplus\mbb{C}\oplus\mbb{C}$.

Consider now a conformal tensor density, and project it over the frame $\{N^a_{\bf a}\}$ and its dual. 
These components (denoted generically by $\eta$) will be appropriately rescaled under (\ref{tnt}); i.e., they
form representations $\Pi_{b,s,w}:G_o\to{\rm GL}(\mbb{C})$ of $G_o$ given by
\begin{equation}\label{rep-bsw}
 \Pi_{b,s,w}(a,z,\Om)\eta:=a^b z^s \Om^w \eta, \qquad \eta\in\mbb{C}, \qquad (a,z,\Om)\in G_o,
\end{equation}
where $b$ and $s$ are integer numbers and $w\in\R$.
The components $\eta$ are then said to have {\em boost weight} $b$, {\em spin weight} $s$, and 
{\em conformal weight} $w$. In the spinor approach it is more natural to use instead the alternative 
$\{p,q\}$ weights defined by $b=(p+q)/2$, $s=(p-q)/2$, in which case we can denote the representation 
(\ref{rep-bsw}) as $\Pi_{p,q,w}$. These quantities will be said to have type $\{w;p,q\}$. The associated 
representation of the Lie algebra is 
\begin{equation}
 \pi_{p,q,w}(x,y,v)\eta=(p\tfrac{(x+y)}{2}+q\tfrac{(x-y)}{2}+wv)\eta.
\end{equation}
For (conformal) tensor {\em fields}, the components are sections of the associated vector bundles
\begin{equation}\label{avb}
 \mbb{E}_{\{p,q\}}[w]:=B\times_{\Pi_{p,q,w}}\mbb{C}.
\end{equation}
A covariant derivative on this structure can be induced by the connection form $\psi_a$ on $B$:
\begin{lem}
 The covariant derivative on the associated vector bundles $\mbb{E}_{\{p,q\}}[w]$ is
 \begin{equation}\label{icd}
  \slashed{\T}_a\eta=\p_a\eta+\pi_{p,q,w}(\psi_a)\eta.
 \end{equation}
 Explicitly, this is
 \begin{equation}\label{slashT}
  \slashed{\T}_a\eta=\p_a\eta +wf_a\eta +p(\w_a+B_a)\eta +q(\bar{\w}_a+C_a)\eta,
 \end{equation}
 where $\w_a$ is the GHP connection form, and the complex 1-forms $B_a$, $C_a$ are defined by
 \begin{align}
 B_{a}&:=\tfrac{1}{2}(K_a+L_a)=w_1(\rho n_{a}-\t\bar{m}_{a})-w_0(\rho' \ell_{a}-\t' m_{a}), \label{B}  \\
 C_{a}&:=\tfrac{1}{2}(K_a-L_a)=w_1(\rho n_{a}-\t' m_{a})-w_0(\rho' \ell_{a}-\t\bar{m}_{a}). \label{C}
\end{align}
\end{lem}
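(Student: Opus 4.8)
The plan is to establish the abstract formula (\ref{icd}) first and then unpack it into the explicit expression (\ref{slashT}). For (\ref{icd}) I would invoke nothing more than the standard description of the covariant derivative induced on a vector bundle associated to a principal bundle: given the principal $G_o$-bundle $B$ carrying the connection $1$-form $\psi_a=(K_a,L_a,f_a)$ constructed above, and the representation $\Pi_{p,q,w}$ with its associated Lie-algebra representation $\pi_{p,q,w}$, a local section $\eta$ of $\mbb{E}_{\{p,q\}}[w]=B\times_{\Pi_{p,q,w}}\mbb{C}$ differentiates as $\slashed{\T}_a\eta=\p_a\eta+\pi_{p,q,w}(\psi_a)\eta$ in the chosen trivialization. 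No spacetime input enters here; this is pure principal-bundle theory.

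The content of (\ref{slashT}) is then the explicit evaluation of $\pi_{p,q,w}(\psi_a)$. The organizing principle I would use is that the Weyl--GHP derivative $\slashed{\T}_a$ should equal the ordinary (Levi-Civita) weighted GHP derivative $\T_a=\p_a+p\w_a+q\bar{\w}_a$ plus a purely $f_a$-dependent correction coming from the difference between the Weyl and Levi-Civita connections. Using $\pi_{p,q,w}(x,y,v)=p\tfrac{x+y}{2}+q\tfrac{x-y}{2}+wv$ with $(x,y,v)=(K_a,L_a,f_a)$ gives $\pi_{p,q,w}(\psi_a)=p\,\tfrac12(K_a+L_a)+q\,\tfrac12(K_a-L_a)+wf_a$, so the real work is to compute the diagonal frame components $\slashed{\w}_a{}^{\bf b}{}_{\bf b}=N^{\bf b}_b\slashed{\c}_aN^b_{\bf c}$ that build $K_a$ and $L_a$. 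I would write $\slashed{\c}_a=\c_a+(Q\text{-correction})$ with $Q_a{}^b{}_c$ as in (\ref{Qcc}), built from the explicit $f_a$ of (\ref{A}), contract against the dual null frame, and use $\ell\cdot n=1$, $m\cdot\bar{m}=-1$. Each diagonal component then splits into a Levi-Civita part and a $Q$-part: the Levi-Civita parts, reassembled through $K_a,L_a$, reproduce the GHP connection forms $\w_a,\bar{\w}_a$, while the $Q$-parts produce the explicit $1$-forms $B_a,C_a$ of (\ref{B})--(\ref{C}) and reproduce $f_a$ in the $\R^{+}$ slot. Concretely one finds $\tfrac12(K_a+L_a)=\w_a+B_a$ and $\tfrac12(K_a-L_a)=\bar{\w}_a+C_a$, with the conformal-weight term $wf_a$ being exactly the correction (\ref{ccd}); substituting into (\ref{icd}) then yields (\ref{slashT}).

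The main obstacle is precisely this second, computational step: expanding $N^{\bf b}_b\slashed{\c}_aN^b_{\bf c}$ with the correct signs and dual-frame normalizations, checking that the $w_0,w_1$-dependence entering $K_a$ collapses in the Levi-Civita part (it must, by $w_0+w_1=-1$, since $\w_a,\bar{\w}_a$ carry no conformal weight) while surviving in $B_a,C_a$ exactly as in (\ref{B})--(\ref{C}), and matching the $Q$-contributions term by term against $f_a=\rho n_a+\rho'\ell_a-\t\bar{m}_a-\t'm_a$. This is routine but error-prone GHP bookkeeping; everything else is formal.
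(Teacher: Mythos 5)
Your proposal is correct and is essentially the paper's own argument: the paper's proof is a one-line appeal to ``the discussion above'' (the construction of $B$, of $\psi_a=(K_a,L_a,f_a)$, and of $\Pi_{p,q,w}$) together with the general formula for the covariant derivative induced on associated vector bundles, and your two steps simply make that explicit. Your computational reading $\tfrac{1}{2}(K_a+L_a)=\w_a+B_a$ and $\tfrac{1}{2}(K_a-L_a)=\bar{\w}_a+C_a$ --- the Levi-Civita parts of the diagonal frame components reassembling into the GHP forms (which requires $w_0+w_1=-1$, as you note), the $Q$-parts giving the explicit expressions (\ref{B})--(\ref{C}) --- is exactly what is needed for (\ref{icd}) to reproduce (\ref{slashT}), and it is the correct resolution of the paper's slightly loose labelling $B_a:=\tfrac{1}{2}(K_a+L_a)$, $C_a:=\tfrac{1}{2}(K_a-L_a)$, in which those combinations in fact still contain $\w_a$ and $\bar{\w}_a$.
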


\begin{proof}
This is deduced from our discussion above, and from the general formula for the covariant derivative 
induced on associated vector bundles.
\end{proof}

\begin{remark}[{\bf The Teukolsky connection}]
Choose the conformal weights $w_0=0$ and $w_1=-1$ in (\ref{ctsf}). 
For quantities with $w=0$ and $q=0$, $\slashed{\T}_a$ {\em coincides exactly} with the Teukolsky 
derivative $D_a$. (Note that, since the fields we are interested in --namely $\vp_{A_1...A_{2\s}}$-- 
have no primed indices, no $q$-weight will appear in our general formulae.) 
For conformal weight $w\neq0$, by writing $f_a$ in 
the form (\ref{Apsi2}) we see that the covariant derivative (\ref{slashT}) describes exactly the 
combination of Teukolsky derivatives and different powers of $\Psi^{1/3}_2$ that appear in formulae
involving $\iota^A$, see section \ref{sec-rt}.
\end{remark}

From the discussion above we can conclude: 
\begin{thm}\label{thm-teuk}
The Teukolsky connection is originated in the covariant derivative naturally induced on the vector bundles 
(\ref{avb}) associated to representations of the GHP-conformal symmetry.
\end{thm}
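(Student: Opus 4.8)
The plan is to assemble the statement directly from the chain of geometric constructions built up in this section, tracing the preferred Weyl connection down to the covariant derivative on the associated bundles (\ref{avb}), and then to verify by direct comparison that a particular choice of weights reproduces the Teukolsky connection $D_a$ of the introduction. Since the intermediate Lemma already furnishes the explicit induced derivative (\ref{slashT}), the proof amounts to (i) recalling how $\psi_a$ arises as a genuine connection form on $B$, and (ii) carrying out the specialization.

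First I would start from the unique $J$-compatible Weyl connection $\slashed{\c}_a$, whose 1-form $f_a$ follows from (\ref{uwc}) and is given explicitly by (\ref{A}). This determines a local connection form $\slashed{\w}{}_a{}^{\bf b}{}_{\bf c}=N^{\bf b}_b\slashed{\c}_a N^b_{\bf c}$ on the principal bundle $\pi_P:P\to\M$ with structure group $G=\rm{SO}(1,3)^{\uparrow}\times\R^{+}$, taking values in $\mf{g}^{\mbb{C}}$ because $f_a$ is complex. Fixing the null directions $\ell^a,n^a$ then reduces $G$ to $G_o=\R^{\times}\times\rm{U}(1)\times\R^{+}$ and $P$ to $B$. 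The essential point is to isolate which parts of $\slashed{\w}$ descend to a bona fide connection on $B$: precisely those that fail to transform tensorially under $G_o$, namely the diagonal entries $\slashed{\w}{}_a{}^{0}{}_{0},\slashed{\w}{}_a{}^{1}{}_{1},\slashed{\w}{}_a{}^{2}{}_{2},\slashed{\w}{}_a{}^{3}{}_{3}$. Computing their behavior under (\ref{tnt}) and recombining them into the pieces that respond to a single factor of the product group yields $K_a,L_a,f_a$ and hence the connection form $\psi_a=(K_a,L_a,f_a)$ valued in $\mf{g}^{\mbb{C}}_o$.

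Second, I would invoke the standard formula for the covariant derivative induced on a vector bundle associated to a principal bundle through a representation, applied here to $\Pi_{p,q,w}$, to obtain $\slashed{\T}_a\eta=\p_a\eta+\pi_{p,q,w}(\psi_a)\eta$; unpacking $\pi_{p,q,w}$ on $\psi_a$ gives (\ref{slashT}) with $B_a,C_a$ as in (\ref{B})--(\ref{C}), which is exactly the content of the preceding Lemma. Finally, setting $w_0=0$, $w_1=-1$ and restricting to the fields relevant to (\ref{feq}) --- for which $w=0$ and $q=0$ --- collapses (\ref{slashT}) to $\slashed{\T}_a=\p_a+p(\w_a+B_a)$ with $B_a=-\rho n_a+\t\bar{m}_a$, so that $\slashed{\T}_a=\T_a+pB_a=D_a$. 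The step I expect to demand the most care is the reduction $G\to G_o$: correctly separating the connection part of $\slashed{\w}$ from its tensorial (off-diagonal) part, and checking that the recombination into $K_a,L_a,f_a$ is dictated by the transformation rules, since it is this that guarantees $\psi_a$ is a well-defined connection on $B$ rather than an arbitrary $\mf{g}^{\mbb{C}}_o$-valued 1-form.
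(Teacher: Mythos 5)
Your proposal is correct and follows essentially the same route as the paper: the theorem there is stated as a direct consequence of exactly this chain — the $J$-compatible Weyl connection inducing $\slashed{\w}$ on $P$, the reduction $G\to G_o$ after fixing the null directions, the extraction of $\psi_a=(K_a,L_a,f_a)$ from the non-covariant (diagonal) parts of $\slashed{\w}$, the induced derivative (\ref{icd})--(\ref{slashT}) on the associated bundles (\ref{avb}), and the specialization $w_0=0$, $w_1=-1$, $w=0$, $q=0$ recovering $D_a=\T_a+pB_a$ with $B_a=-\rho n_a+\t\bar{m}_a$. Your emphasis on the reduction step as the delicate point matches the paper's own presentation, so there is nothing to correct.
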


An important property of (\ref{icd}) is the following:
\begin{prop}
$\slashed{\T}_a$ commutes with the GHP prime operation\footnote{We recall that the GHP prime operation 
is defined as the transformation $o^A\to i\iota^A$, $\iota^A\to io^A$.}:
\begin{equation}\label{prime}
(\slashed{\T}_{a}\eta)'=\slashed{\T}_{a}\eta'.
\end{equation}
\end{prop}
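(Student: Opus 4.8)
The plan is to reduce the identity (\ref{prime}) to the transformation behaviour of the three connection $1$-forms $f_a,B_a,C_a$ under the prime, together with the already-known prime-covariance of the underlying GHP connection. First I would record how the prime $o^A\to i\iota^A,\ \iota^A\to io^A$ acts on the ingredients of (\ref{slashT}): on the coframe it gives $\ell_a\leftrightarrow n_a$ and $m_a\leftrightarrow\bar{m}_a$; on the spin coefficients it interchanges primed and unprimed quantities, $\rho\leftrightarrow\rho'$, $\t\leftrightarrow\t'$, $\e\leftrightarrow\e'$, $\b\leftrightarrow\b'$; it carries a quantity of type $\{w;p,q\}$ to one of type $\{w;-p,-q\}$, the conformal weight $w$ being untouched since it is a property of the conformal density and not of the chosen null directions; and, because it exchanges the roles of $o^A$ and $\iota^A$, it interchanges their conformal weights $w_0\leftrightarrow w_1$ in (\ref{ctsf}). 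I would also use that the prime is multiplicative, $(\eta_1\eta_2)'=\eta_1'\eta_2'$, while a number such as $w,p,q$ is inert.

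Writing $\slashed{\T}_a=\T_a+wf_a+pB_a+qC_a$, where $\T_a=\p_a+p\w_a+q\bar{\w}_a$ is the ordinary GHP derivative, the statement splits into two parts. For the ordinary part I would invoke the standard prime-covariance of the GHP formalism \cite{GHP}, $(\T_a\eta)'=\T_a\eta'$ (at the level of the connection form this is $\w_a'=-\w_a$, whose sign cancels the flip $p\to-p$); this is what disposes of the bare frame-derivative term, which cannot be primed in isolation. The core computation is then to feed the transformation rules above into (\ref{A}), (\ref{B}) and (\ref{C}) and verify $f_a'=f_a$, $B_a'=-B_a$, $C_a'=-C_a$. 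The first is immediate because $f_a$ contains no $w_0,w_1$ and each of its four monomials is sent to another one (e.g. $\rho n_a\to\rho'\ell_a$, $\t\bar{m}_a\to\t'm_a$). For $B_a$ and $C_a$ the interchange $w_0\leftrightarrow w_1$ must be applied simultaneously with the spin-coefficient and coframe swaps; with it, each monomial of $B_a$ is carried into minus the corresponding monomial, and likewise for $C_a$, producing the two sign reversals.

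Assembling, and using multiplicativity of the prime together with the type $\{w;-p,-q\}$ of $\eta'$, I would conclude $(\slashed{\T}_a\eta)'=(\T_a\eta)'+wf_a'\eta'+pB_a'\eta'+qC_a'\eta'=\T_a\eta'+wf_a\eta'-pB_a\eta'-qC_a\eta'=\slashed{\T}_a\eta'$, which is (\ref{prime}). The main obstacle is the bookkeeping of the conformal weights: the identity would be false for a fixed asymmetric choice such as $w_0=0,\ w_1=-1$ were one to leave $w_0,w_1$ untouched, so the one delicate point is to recognise that the prime, by swapping the two principal spinors, also swaps $w_0\leftrightarrow w_1$; getting this interchange right is exactly what makes the signs in $B_a'=-B_a$ and $C_a'=-C_a$ come out correctly.
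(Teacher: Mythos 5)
Your final identity is true, but the route you take rests on transformation rules that are false for the prime operation as defined here, and your diagnosis of the ``delicate point'' is exactly backwards. The prime is the substitution $o^A\to i\iota^A$, $\iota^A\to io^A$: it acts on frame-built objects (tetrad legs, spin coefficients) and leaves numerical constants alone. The weights $w_0,w_1$ in (\ref{ctsf}) are fixed convention constants, not quantities built from the dyad, so the prime does \emph{not} interchange them; indeed your rule is not even well defined as an operation on quantities, since it depends on how an expression is written. Concretely, with $w_0=0$, $w_1=-1$ one has $B_a=-\rho n_a+\t\bar{m}_a$, whose prime by direct substitution is $-\rho'\ell_a+\t' m_a$, whereas your claim $B_a'=-B_a$ would give $\rho n_a-\t\bar{m}_a$, a different $1$-form. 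The correct rules, which are what the paper's proof records, are
\begin{equation*}
B'_{a}=-(w_0-w_1)f_{a}-B_{a},\qquad C'_{a}=-(w_0-w_1)f_{a}-C_{a},
\end{equation*}
and, correspondingly, the prime \emph{does} change the conformal weight: if $\eta$ has type $\{w;p,q\}$ then $\eta'$ has type $\{w';-p,-q\}$ with $w'=w-(w_0-w_1)(p+q)$. (For example $\vp_Ao^A$ has weight $-1+w_0$ while its prime $i\vp_A\iota^A$ has weight $-1+w_1$: the conformal weight of a GHP scalar does depend on the chosen null directions, contrary to your assertion.)

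The reason you nevertheless land on (\ref{prime}) is that your two errors cancel: your evaluation of $(\slashed{\T}_a\eta)'$ and your evaluation of $\slashed{\T}_a\eta'$ each differ from the true values by the same term $(p+q)(w_0-w_1)f_a\eta'$, so equating them is equivalent to the correct statement. But as written, the second and third equalities in your final chain are both false, and your claim that the identity ``would be false for a fixed asymmetric choice such as $w_0=0$, $w_1=-1$ were one to leave $w_0,w_1$ untouched'' is wrong --- the paper proves the proposition precisely with $w_0,w_1$ untouched, and then uses it with $w_0=0$, $w_1=-1$ throughout. The actual mechanism is the compensation between the weight shift $w\to w'$ of $\eta'$ and the inhomogeneous $f_a$-terms in $B'_a$ and $C'_a$; once these two corrected rules replace your homogeneous ones, the assembly you performed goes through verbatim and coincides with the paper's proof.
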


\begin{proof}
If $\eta$ is of type $\{w;p,q\}$, then $\eta'$ is of type $\{w';-p,-q\}$, where $w'$ is given by 
$w'= w-(w_0-w_1)(p+q)$. Using that
\begin{align}
 & B'_{a}=-(w_0-w_1)f_{a}-B_{a}, \\
 & C'_{a}=-(w_0-w_1)f_{a}-C_{a},
\end{align}
the result follows easily by calculating both sides of (\ref{prime}).
\end{proof}

As in the usual GHP formalism, this result allows one to halve the number of calculations.
On the other hand, $\slashed{\T}_a$ {\em does not} commute with complex conjugation. 
This is related to our {\em chiral} treatment, see Remark \ref{remarkCh}.

For spinor/tensor fields that are GHP and conformally weighted and have a nontrivial structure of indices,
we must complement (\ref{slashT}) with 
the objects (\ref{Qcc}) and (\ref{Wcc}) in order to get a derivative operator that is covariant under 
both kinds of transformations.
To this end, we define a covariant derivative $\C_a$ that acts on a vector $v^b$ and a spinor $\k^B$,
both with conformal weight $w$ and GHP type $\{p,q\}$, as
\begin{align}
 & \C_{a}v^b = \slashed{\c}_{a}v^{b}+[wf_a+p(\w_a+B_a)+q(\bar\w_a+C_a)]v^b, \label{Cv} \\
 & \C_{a}\k^B = \slashed{\c}_a\k^B+[wf_a+p(\w_a+B_a)+q(\bar\w_a+C_a)]\k^B. \label{Cs}
\end{align}
The extension of the action of $\C_{a}$ to a spinor/tensor field with an arbitrary structure of indices 
follows straightforwardly.
One can then check that, under conformal and GHP transformations, for an arbitrary spinor/tensor field $\Psi$
of type $\{w;p,q\}$ we have
\begin{equation}
 \C_{a}\Psi \; \to \;  \Om^w \l^p \bar\l^q \C_{a}\Psi.
\end{equation}

Finally, the projection of $\C_{a}$ on a null tetrad defines the conformally weighted GHP operators
\begin{equation}
 \tho_{\C}:=\ell^{a}\C_{a}, \qquad  \tho'_{\C}:=n^{a}\C_{a}, \qquad
  \edt_{\C}:=m^{a}\C_{a}, \qquad  \edt'_{\C}:=\bar{m}^{a}\C_{a}.
\end{equation}
We emphasize that, with the above definition of $\C_a$, these operators act on arbitrary
{\em spinor/tensor} conformal densities.
For the particular case of a {\em scalar} conformal density $\eta$ of type $\{w;p,q\}$, we have
\begin{align}
 \tho_{\C}\eta&=[\tho+(w+(p+q)w_1)\rho]\eta, \\
 \tho'_{\C}\eta&=[\tho'+(w-(p+q)w_0)\rho']\eta, \\
 \edt_{\C}\eta&=[\edt+(w+pw_1-qw_0)\t]\eta, \\
 \edt'_{\C}\eta&=[\edt'+(w-pw_0+qw_1)\t']\eta,
\end{align}
this way we recover the operators defined by Penrose $\&$ Rindler in \cite[section 5.6]{Penrose1} 
(see Eqs. (5.6.36) in that reference).

\subsection{Parallel and Killing spinors}\label{sec-wks}

As we have seen, the almost-comlpex structure (\ref{acs}) of type D spacetimes determines 
a unique Weyl connection on the conformal structure, and thus we have a naturally induced covariant 
derivative on weighted spinor/tensor bundles, given by (\ref{Cs}), (\ref{Cv}), (\ref{icd}).
This leads us to a geometrically meaningful generalization of some interesting differential 
equations; for example, we are naturally led to the notion of a {\em weighted Killing spinor} 
as an element $\w^{A_1...A_n}$ of $\mbb{S}^{A_1...A_n}_{\{p,q\}}[w]$ 
that satisfies the equation
\begin{equation}\label{wks}
 \C_{A'}{}^{(A}\w^{B_1...B_n)}=0.
\end{equation}
A stronger condition is that of a {\em parallel spinor}, i.e. a solution of 
\begin{equation}
 \C_{a}\w^{B_1...B_n}=0.
\end{equation}

\begin{lem}\label{lem-ps}
Let $(\M,[g_{ab}],J)$ be the conformal almost-Hermitian manifold associated to an Einstein spacetime 
$(\M,g_{ab})$, where $J$ is given by (\ref{acs}) for an arbitrarily chosen pair of null directions $\{o^A,\iota^A\}$, 
with $o^A \in\mbb{S}^{A}_{\{1,0\}}[w_0]$, $\iota^A \in\mbb{S}^{A}_{\{-1,0\}}[w_1]$ and $o_A\iota^A=1$. 
Then the following two are equivalent:
\begin{itemize}
\item[$(i)$] The spacetime is algebraically special, with principal null direction $o^A$.
\item[$(ii)$] The spinor $o^A$ is parallel with respect to the naturally induced covariant derivative, 
namely $\C_ao^B=0$.
\end{itemize}
In particular, the spacetime is of Petrov type D if and only if both spinors are parallel, $\C_ao^B=0=\C_a\iota^B$.
\end{lem}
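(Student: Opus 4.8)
The plan is to prove $(i)\Leftrightarrow(ii)$ by computing $\C_a o^B$ explicitly and showing it is proportional to $\iota^B$, with a coefficient built solely from the spin coefficients $\k$ and $\sigma$; the vanishing of that coefficient is then tied to algebraic speciality by the Goldberg--Sachs theorem, which is available precisely because $(\M,g_{ab})$ is Einstein.

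First I would specialize the induced derivative (\ref{Cs}) to $o^B$, which has type $\{1,0\}$ and conformal weight $w_0$ (so $p=1$, $q=0$, $w=w_0$):
\[
 \C_{a}o^{B}=\slashed{\c}_{a}o^{B}+[w_0 f_{a}+\w_{a}+B_{a}]\,o^{B}.
\]
The key is to evaluate $\slashed{\c}_{a}o^{B}=\a_{o}\,o^{B}+\b_{o}\,\iota^{B}$ in the dyad. I would pass from $\slashed{\c}_a$ to the Levi-Civita connection via the Weyl correction (\ref{Wcc}) and expand $\c_ao^B$ in the Newman--Penrose directional derivatives of the dyad. The $\iota^{B}$-component comes out as $\b_{o}=-\k n_{a}+\sigma\bar{m}_{a}$, in agreement with the $\iota_B\iota^C$ part of (\ref{scj}); the $o^{B}$-component is $\a_{o}=\omega^{\mathrm{NP}}_{a}+\rho\,n_{a}-\t\,\bar{m}_{a}$, where $\omega^{\mathrm{NP}}_{a}$ denotes the Newman--Penrose connection coefficient of $o^B$.

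The crux is then to show that the full $o^{B}$-coefficient of $\C_ao^B$ cancels, i.e.\ that $\a_{o}+w_0 f_{a}+\w_{a}+B_{a}=0$. Inserting the explicit forms (\ref{A}) of $f_a$ and (\ref{B}) of $B_a$ and using $w_0+w_1+1=0$, the combination $w_0 f_a+B_a$ collapses to $-\rho\,n_a+\t\,\bar{m}_a$ independently of the individual weights, cancelling the $\rho,\t$ terms in $\a_o$; what remains, $\omega^{\mathrm{NP}}_a+\w_a$, vanishes by the standard identification of the GHP connection $\w_a$ with the Newman--Penrose spin coefficients. Keeping the several conventions (Weyl versus Levi-Civita, NP versus GHP, and the conformal weights) mutually consistent through this bookkeeping is where I expect the main difficulty to lie; conceptually, though, the cancellation simply expresses that the connection terms built into $\C_a$ are tailored to annihilate the $o$-direction of $\slashed{\c}_ao^B$.

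Granting this, one obtains $\C_ao^B=(-\k n_a+\sigma\bar{m}_a)\iota^B$, so $\C_ao^B=0$ if and only if $\k=\sigma=0$. For an Einstein spacetime, Goldberg--Sachs (recalled around (\ref{as})) identifies $\k=\sigma=0$ with the existence of a shear-free null geodesic congruence along $o^A$, i.e.\ with algebraic speciality having $o^A$ as principal null direction; this proves $(i)\Leftrightarrow(ii)$. For the final assertion, applying the GHP prime operation to $\C_ao^B=0$ --- equivalently, repeating the computation for $\iota^B$, of type $\{-1,0\}$ --- gives $\C_a\iota^B=(-\k'\ell_a+\sigma'm_a)o^B$, so $\C_a\iota^B=0$ if and only if $\k'=\sigma'=0$, i.e.\ $\iota^A$ is a principal null direction. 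Since Petrov type D is exactly the case in which both $o^A$ and $\iota^A$ are repeated principal null directions, the spacetime is type D if and only if both spinors are parallel, $\C_ao^B=0=\C_a\iota^B$.
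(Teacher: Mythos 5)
Your proposal is correct and takes essentially the same route as the paper: the paper's entire proof consists of stating the result of ``a straightforward calculation,'' namely $\C_{a}o^B=(-\k n_{a}+\sigma\bar{m}_{a})\iota^B$ (its equation (\ref{Co})), then invoking the Goldberg--Sachs theorem for the Einstein metric and the commutation of $\C_a$ with the prime operation to handle $\iota^B$ and the type D statement. Your explicit bookkeeping --- the collapse of $w_0 f_a+B_a$ to $-\rho n_a+\t\bar{m}_a$ via $w_0+w_1+1=0$, and the cancellation of the remaining $o^B$-coefficient against the GHP connection form $\w_a$ --- is exactly the calculation the paper leaves implicit, and it checks out.
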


\begin{proof}
A straightforward calculation shows that (for any conformal weights $w_0,w_1$ in (\ref{ctsf}))
\begin{equation}\label{Co}
 \C_{a}o^B=(-\k n_{a}+\sigma\bar{m}_{a})\iota^B.
\end{equation}
By the Goldberg-Sachs theorem, algebraic speciality is equivalent to $\k=\sigma=0$, thus the 
result $(i)\Leftrightarrow(ii)$ follows. Since $\C_a$ commutes with the prime operation, we immediately see that 
we also have $\C_{a}\iota^B=0$ for a type D spacetime, and that, conversely, if $\C_ao^B=0=\C_a\iota^B$, 
then the spacetime is type D.
\end{proof}

Note that, since $\C_a$ does not commute with complex conjugation, the condition $\C_a o^B=0$ {\em does not} 
imply $\C_a\bar{o}^{B'}=0$; in fact we generally have $\C_a\bar{o}^{B'}\neq0$.

The result of Lemma \ref{lem-ps} then explains the modified Killing spinor equation (\ref{mks}). 
For, in a type D spacetime, by Remark \ref{remarkA} 
we can write the 1-form $f_a$ as in (\ref{Apsi2}), and recalling the definition of the Teukolsky derivative given 
in the introduction (using now the more general form (\ref{B}) for $B_a$), after an easy calculation we get
\begin{align}
 & \C_{A'}{}^{(A}o^{B)}=\Psi^{-w_0/3}_2 D_{A'}{}^{(A}[\Psi^{w_0/3}_2o^{B)}], \\
 & \C_{A'}{}^{(A}\iota^{B)}=\Psi^{-w_1/3}_2 D_{A'}{}^{(A}[\Psi^{w_1/3}_2\iota^{B)}].
\end{align}
Choosing $w_0=0$, $w_1=-1$, we immediately obtain (\ref{mks}) on a type D spacetime, as a 
consequence of Lemma \ref{lem-ps}.
Note that, in particular, the vanishing of $\C_a[o^B\iota^C]$ gives the ordinary Killing spinor of 
type D spaces mentioned in the introduction.

\subsection{Weighted local twistors}\label{sec-wlt}

A useful framework for dealing with conformal geometry in arbitrary dimensions is the {\em tractor formalism}, 
see e.g. \cite{Curry2014}. When specialized to conformal spinor geometry in 4 dimensions, the tractor 
calculus becomes the {\em local twistor} formalism, which is a possible generalization of the original twistor 
theory to curved spacetimes, see \cite{PenroseMacCallum, Dighton} and \cite[section 6.9]{Penrose2}. 
In this section we show that the distinguished Weyl connection of conformal type D spaces leads 
to a natural injection of spinor space into the 
local twistor bundle, and make some general remarks about the construction.

A local twistor can be represented as a pair of spinors $(\w^A,\pi_{A'})$ such that under a conformal 
transformation, the local twistor itself is invariant, but its representation into spinor parts is changed 
according to $\wh{\w}^A=\w^A$, $\wh{\pi}_{A'}=\pi_{A'}+i\U_{AA'}\w^A$. This is usually described by 
the short exact sequence
\begin{equation}\label{es1}
 0 \to \mbb{S}_{A'} \to \mbb{T}^{\a} \to \mbb{S}^{A} \to 0,
\end{equation}
where $\mbb{T}^{\a}$ is the space of local twistors\footnote{Greek letters $\a,\b,\g,...$ denote twistor 
indices, and take values in $\{0,1,2,3\}$.}, and the second and third maps 
are given respectively by $\pi_{A'}\mapsto (0,\pi_{A'})$ and $(\w^A,\pi_{A'})\mapsto \w^A$. 
The sequence is then conformally invariant.
The space $\mbb{T}^{\a}$ is a vector bundle over $\M$ with structure group $\rm{SU}(2,2)$, and 
with a conformally invariant connection called {\em local twistor transport}, given by
\begin{equation}\label{cdt}
 {}^{\text{\sc t}}\c_{a}\Z^{\b} := 
 ( \c_{a}\w^B+i\e_{A}{}^{B}\pi_{A'} , \c_{a}\pi_{B'}+iP_{AA'CB'}\w^C),
\end{equation}
where $P_{AA'BB'}$ is the Schouten tensor (\ref{schouten}). 
A {\em global twistor} is one which is parallel under local twistor transport, and in that case it coincides 
with the usual twistor concept in a (conformally) flat spacetime. 
The {\em primary spinor part} (i.e. the spinor part with all its indices at the upper position) of a 
global twistor is a Killing spinor: $\c_{A'A}\w^B=-i\e_{A}{}^{B}\pi_{A'}$.

An exact sequence like (\ref{es1}) but with the arrows in the opposite direction can be obtained 
by means of the 1-form (\ref{A}). This is because (\ref{A}) allows a natural injection of $\mbb{S}^A$ into 
local twistor space, by mapping a spinor field $\w^A$ (with conformal weight zero) to $(\w^A,\a_{A'})$, 
where $\a_{A'}=-if_{AA'}\w^A$. In other words, we have the short exact sequence
\begin{equation}\label{es2}
 0 \to \mbb{S}^{A} \to \mbb{T}^{\a} \to \mbb{S}_{A'} \to 0,
\end{equation}
where the second and third maps are now $\w^A\mapsto(\w^A,\a_{A'})$ and 
$(\w^A,\pi_{A'}) \mapsto \pi_{A'}+if_{AA'}\w^A$. The exactness of the sequence and its conformal 
invariance are  easily checked. Consider for example a spin frame $\{o^A,\iota^A\}$, and choose 
the conformal weights $w_0=0$ and $w_1=-1$ in (\ref{ctsf}). 
Via (\ref{es2}), we have the local twistor $\X^{\a}=(o^A,\a_{A'})$.
On the other hand, the pair $\Y^{\a}=(\iota^A,\b_{A'})$, where $\b_{A'}:=-if_{AA'}\iota^A$, is 
a conformally weighted local twistor, with conformal weight $w=-1$, i.e. a section of 
$\mbb{T}^{\a}\otimes\mc{E}[-1]$.
But note that $\X^{\a}$ and $\Y^{\a}$ are also GHP weighted: they have GHP types $\{1,0\}$
and $\{-1,0\}$ respectively. This situation then leads us naturally to the 
consideration of {\em weighted} local twistors.

We will say that $\Z^{\a}$ is a weighted local twistor if, under conformal and GHP transformations, 
it transforms as $\Z^{\a}\to \Om^{w}\Z^{\a}$ and $\Z^{\a}\to \l^p\bar\l^q \Z^{\a}$ 
respectively. In the spinor representation $\Z^{\a}=(\w^A,\pi_{A'})$, both spinor parts have
GHP type $\{p,q\}$, and their conformal behavior is $\wh{\w}^A=\Om^{w}\w^A$, 
$\wh{\pi}_{A'}=\Om^{w}(\pi_{A'}+i\U_{AA'}\w^A)$.
The set of weighted local twistors can then be identified with the vector bundle 
$\mbb{T}^{\a}_{\{p,q\}}[w]:=\mbb{T}^{\a}\otimes\mbb{E}_{\{p,q\}}[w]$.
We can construct a connection on this structure by combining the usual local twistor transport (\ref{cdt})
with the covariant derivative (\ref{slashT}), extended to act `trivially' on spinor indices. 
More precisely, we define
\begin{equation}\label{cdwt}
 {}^{\text{\sc t}}\C_{a}\Z^{\b} := 
 ( \slashed{\T}_{a}\w^B+i\e_{A}{}^{B}\pi_{A'} , \slashed{\T}_{a}\pi_{B'}+iP_{AA'CB'}\w^C).
\end{equation}
In this expression, we have\footnote{the spinor 
$\w^A$ and the GHP connection form $\w_a$ should not be confused in this formula.} 
$\slashed{\T}_a\w^B=\c_a\w^B+wf_a\w^B+p(\w_a+B_a)\w^B+q(\bar{\w}_a+C_a)\w^B$, and analogously 
for $\slashed{\T}_a\pi_{B'}$.
One can then check that ${}^{\text{\sc t}}\C_{a}\Z^{\b}$ is again a weighted local twistor, with 
conformal weight $w$ and GHP type $\{p,q\}$; therefore (\ref{cdwt}) gives a connection on 
$\mbb{T}^{\a}_{\{p,q\}}[w]$.
In this framework we can define a weighted {\em global} twistor as one which is parallel under 
weighted local twistor transport (\ref{cdwt}).
This leads us to the definition of a weighted Killing spinor as the primary spinor part of a parallel 
weighted global twistor:
\begin{equation}
 \slashed{\T}_{A'A}\w^B=-i\e_{A}{}^{B}\pi_{A'}.
\end{equation}
The generalization of this equation is
\begin{equation}
 \slashed{\T}_{A'}{}^{(A}\w^{B_1...B_n)}=0.
\end{equation}
It is easily checked that $\slashed{\T}_{A'}{}^{(A}\w^{B_1...B_n)}= \C_{A'}{}^{(A}\w^{B_1...B_n)}$, therefore 
this definition coincides with (\ref{wks}) (the symmetrization being crucial for this to hold).

\section{Massless fields}

\subsection{Conformally covariant identities}\label{sec-cci}

We now wish to find conformally covariant identities for massless free fields propagating in a 
background curved spacetime, and their possible relation to the identities (\ref{mainid0}).

Let $\vp_{A_1...A_n}$ be an arbitrary symmetric spinor field with conformal weight $-1$ and GHP type $\{0,0\}$.
Recalling the definition of the conformally covariant derivative (\ref{ccd}), for any 1-form $f_a$ associated to 
a Weyl connection we have
\begin{equation}\label{cfeq}
\c^{A_1A'_1}\vp_{A_1...A_n}=\C^{A_1A'_1}\vp_{A_1...A_n}.
\end{equation}
From now on we will use the 1-form $f_a$ given by (\ref{A}), since, as we have seen, it is 
naturally distinguished for the algebraically special spacetimes we are interested in, in relation 
to their almost-complex structure.
The simplest operation that we can effect on $\C^{A_1A'_1}\vp_{A_1...A_n}$ to get a 
conformally covariant second order differential operator, with $n$ totally symmetric unprimed indices,
is to simply take an additional covariant derivative:
\begin{equation}\label{mainspinorop}
 \C_{A'_1(A_1}\C^{A'_1B}\vp_{A_2...A_n)B}.
\end{equation}
It is tedious but straightforward to show that, in an arbitrary spacetime, we have
\begin{align}
\nonumber 2\C_{A'_1(A_1}\C^{A'_1B}\vp_{A_2...A_n)B} =& (\Box_{\{-1;0,0\}}-(n+2)\Psi_2+n\zeta)\vp_{A_1...A_n}\\
 &  -n\m_{(A_1}{}^{B}\vp_{A_2...A_n)B} -2(n-1)\Psi_{(A_1A_2}{}^{BC}\vp_{A_3...A_n)BC},
\end{align}
where we have defined the conformally and GHP covariant wave operator
\begin{equation}\label{ccwo}
 \Box_{\{w;p,q\}}:=g^{ab}\C_a\C_b
\end{equation}
acting on $\mbb{S}_{\{p,q\}A_1...A_n}[w]$,
and also $\zeta:= \sigma\sigma'-\k\k'$, $\m_{AB}:= \chi \iota_A\iota_B- \chi'o_Ao_B$, and 
$\chi := (\tho'+2\rho'-\bar\rho')\k-(\edt'+2\t'-\bar\t)\sigma+2\Psi_1$.
We want to obtain equations for the components of $\vp_{A_1...A_n}$ in an arbitrary spin dyad
$\{o^A,\iota^A\}$, where we choose the conformal weights $w_0=0$ and $w_1=-1$ (with respect to (\ref{ctsf})).
The components are defined by
\begin{equation}
 \vp_k:=\vp_{A_1...A_{k}A_{k+1}...A_n}\iota^{A_1}...\iota^{A_k}o^{A_{k+1}}...o^{A_n}.
\end{equation}
The conformal weight of $\vp_k$ is $w=-1-k$, and its GHP type is $\{n-2k,0\}$. 
Consider the extreme spin weight case; in particular the component $\vp_0$.
Defining $o^{A_1...A_n}=o^{A_1}...o^{A_n}$ and projecting on $o^{A_1...A_n}$, after some tedious calculations
(using (\ref{Co})) we find
\begin{equation}\label{idc0}
 2 o^{A_1...A_n}\C_{A'_1(A_1}\C^{A'_1B}\vp_{A_2...A_n)B}=(\Box_{\{-1;n,0\}}-3n\Psi_2)\vp_{0}+F_n[\vp_1]+G_n[\vp_2],
\end{equation}
where $F_n[\vp_1] = -2n(-\k\tho'_{\C}+\sigma\edt'_{\C}-\chi-\Psi_1)\vp_1-4\Psi_1\vp_1$ 
and $G_n[\vp_2] = -2(n-1)\Psi_0\vp_2$.
Imposing now the field equations $\C^{A'_1A_1}\vp_{A_1...A_n}=0$, and assuming the components 
$\vp_1$ and $\vp_2$ to be arbitrary, we see that $\vp_0$ satisfies a decoupled equation if and only if
$\k=\sigma=\Psi_0=\Psi_1=0$, that is, {\em if and only if the spacetime is algebraically special},
with $o^A$ aligned to the principal null direction. The decoupled equation in such case is
\begin{equation}
 (\Box_{\{-1;n,0\}}-3n\Psi_2)\vp_{0}=0.
\end{equation}

\begin{remark}
Note that the condition $\k=\sigma=\Psi_0=\Psi_1=0$ imposed on the spacetime is well defined on the 
conformal class, since both the spin coefficients $\k, \sigma$ and the Weyl scalars are conformal 
densities, thus the condition extends to all metrics in the conformal class.
This is just another way of saying that the algebraic speciality of a spacetime is common 
to the whole conformal class.
\end{remark}

On the other hand, since the covariant derivative $\C_a$ commutes with the prime operation, 
from (\ref{idc0}) we immediately deduce an analogous identity for the component with 
opposite extreme spin weight:
\begin{equation}\label{idc1}
 2 \iota^{A_1...A_n}\C_{A'_1(A_1}\C^{A'_1B}\vp_{A_2...A_n)B}=(\Box_{\{-n-1;-n,0\}}-3n\Psi_2)\vp_{n}+F'_n[\vp_{n-1}]+G'_n[\vp_{n-2}].
\end{equation}
We then deduce that {\em both} extreme components, $\vp_0$ and $\vp_n$, satisfy decoupled equations if and 
only if it holds $\k=\sigma=\k'=\sigma'=\Psi_0=\Psi_1=\Psi_3=\Psi_4=0$, namely, {\em if and only if the spacetime
is of Petrov type D}.

The previous discussion is valid for fields with arbitrary spin. We now particularize to the 
physically relevant cases of spin $1/2$, 1 and 2.

\subsubsection{Spin $\s=1/2$}

The results just described apply directly. In particular, we have the following identities valid 
for an arbitrary spacetime:
\begin{align}
 & 2 \; o^{A}\C_{A'A}\C^{A'B}\vp_{B}=(\Box_{\{-1;1,0\}}-3\Psi_2)\vp_{0}+F_1[\vp_1], \\
 & 2 \; \iota^{A}\C_{A'A}\C^{A'B}\vp_{B}=(\Box_{\{-2;-1,0\}}-3\Psi_2)\vp_{1}+F'_1[\vp_0].
\end{align}
The extreme component $\vp_0=o^A\vp_A$ of an arbitrary Weyl-Dirac field satisfies a decoupled wave-like 
equation if and only if the spacetime is algebraically special, with PND aligned to $o^A$.
Both components $\vp_0$ and $\vp_1$ decouple if and only if the spacetime is of Petrov type D.

\subsubsection{Spin $\s=1$}

For $\s=1$ the previous results also apply directly. Besides the extreme weight cases, we have 
to consider the corresponding identity for spin weight zero; this can be calculated 
analogously to the others. For an arbitrary spacetime, we get the following identities:
\begin{align}
 & 2 \; o^{AB}\C_{A'(A}\C^{A'C}\vp_{B)C}=(\Box_{\{-1;2,0\}}-6\Psi_2)\vp_{0}+F_2[\vp_1]+G_2[\vp_2], \\
 & 2 \; o^{A}\iota^{B}\C_{A'(A}\C^{A'C}\vp_{B)C}=(\Box_{\{-2;0,0\}}-2\zeta)\vp_{1}+H_2[\vp_2]+H'_2[\vp_0], \\
 & 2 \; \iota^{AB}\C_{A'(A}\C^{A'C}\vp_{B)C}=(\Box_{\{-3;-2,0\}}-6\Psi_2)\vp_{2}+F'_2[\vp_1]+G'_2[\vp_0],
\end{align}
where we have defined $H_{2}[\vp_2]:=2(\k\tho'_{\C}-\sigma\edt'_{\C}+\chi-2\Psi_1)\vp_2$.
As in the spin $\s=1/2$ case, the extreme component of a generic Maxwell field decouples if and only 
if the spacetime is algebraically special.
The spin weight zero component, on the other hand, decouples if and only if the spacetime is type D.

\subsubsection{Spin $\s=2$}\label{sec-spin2}

This case is much more subtle than the previous ones, since if we intend to describe ``massless free fields 
of spin 2 propagating in a curved background spacetime'' as perturbations of the Weyl curvature spinor,
then the conformally covariant description that we developed is not appropriate, since $\Psi_{ABCD}$ 
has conformal weight $w=0$, and it should have $w=-1$ in order for our general formulae to be valid.
This suggests that we describe the required massless spin 2 fields in terms of a {\em rescaled Weyl spinor}, 
defined in the following way.
Given the conformal class $[g_{ab}]$ associated to $g_{ab}$, for an arbitrary representative 
$[g_{ab}]\ni\wh{g}_{ab}=\Om^2 g_{ab}$ we introduce
\begin{equation}\label{rws}
 \vp_{ABCD}:=\Om^{-1}\Psi_{ABCD}.
\end{equation}
This object has conformal weight $w=-1$, and it is actually very used in studies of conformal and asymptotic 
aspects of the Einstein field equations and in the analysis of conformal infinity in General Relativity, see e.g. 
\cite{Friedrich2}, \cite[sections 9.6, 9.7]{Penrose2} and \cite[Ch. 8 and 10]{ValienteKroon}. 
In fact, following closely Penrose $\&$ Rindler \cite{Penrose2}, we can think of (\ref{rws}) as the 
``gravitational spin 2 field'' (see discussion around eq. (9.6.40) and theorem (9.6.41) in \cite{Penrose2}), 
in the sense that the property of {\em peeling} allows to interpret the components of (\ref{rws}) as 
describing the gravitational radiation field near conformal infinity in asymptotically simple 
spacetimes (\cite[eq. (9.7.38)]{Penrose2}).

Let $(\M,g_{ab}(\ve))$ be a monoparametric family of spacetimes, such that $g_{ab}(0)$ satisfies the 
vacuum Einstein equations with cosmological constant. Consider the conformal class associated to 
$g_{ab}(\ve)$, denoted as $[g_{ab}(\ve)]$, where $\wh{g}_{ab}(\ve)\in[g_{ab}(\ve)]$
if and only if there exists $\Om(\ve)>0$ such that $\wh{g}_{ab}(\ve)=\Om^2(\ve)g_{ab}(\ve)$. 
Note that we impose the conformal factor to depend on the parameter $\ve$; in particular, we take 
$\Om(0)\equiv \mr\Om=\rm{constant}$ (the reason for this can be inferred from the calculations given 
in appendix \ref{appendix}).
For an arbitrary representative in $[g_{ab}(\ve)]$, we have $\vp_{ABCD}(\ve)=\Om^{-1}(\ve)\Psi_{ABCD}(\ve)$.
Linearizing equation (\ref{idc0}) for $n=4$, we find\footnote{linearized quantities are denoted with a 
dot, e.g. $\dot{T}:=\frac{d}{d\ve}|_{\ve=0}[T(\ve)]$; quantities without a dot in the right hand side of 
(\ref{lce0}) are understood as evaluated in the background.}
\begin{equation}\label{lce0}
 \tfrac{d}{d\ve}|_{\ve=0}\{2\;o^{ABCD}\C_{A'(A}\C^{A'E}\vp_{BCD)E}\}
 =(\Box_{\{-1;4,0\}}-12\Psi_2)\dot\vp_0-6\vp_2\dot\Psi_0+\dot{B}_0,
\end{equation}
where
\begin{equation}
 \dot{B}_0=(\dot\Box_{\{-1;4,0\}}-12\dot\Psi_2)\vp_0-6\Psi_0\dot\vp_2+\dot{F}_4.
\end{equation}
Now, the field equations describing gravitational perturbations are the linearized vacuum (with cosmological 
constant) Einstein equations, which imply the linearized Bianchi identities
$\frac{d}{d\ve}|_{\ve=0}(\c^{A'A}\Psi_{ABCD})=0$.
Replacing $\vp_{ABCD}$ by $\Om^{-1}\Psi_{ABCD}$ in the left hand side of (\ref{lce0}) and imposing the 
just mentioned field equations, after some tedious calculations (see appendix \ref{appendix}) 
we get that $\dot\vp_0$ satisfies a decoupled equation if and only if $\k=\sigma=\Psi_0=\Psi_1=0$, 
that is, if and only if the background spacetime is algebraically special, with PND aligned to $o^A$.
Note that in such case we have $\dot\vp_0=\mr\Om^{-1}\dot\Psi_0$.
For the component with opposite extreme spin weight, namely $\dot\vp_4$, we have
\begin{equation}\label{lce4}
 \tfrac{d}{d\ve}|_{\ve=0}\{2\;\iota^{ABCD}\C_{A'(A}\C^{A'E}\vp_{BCD)E}\}
 =(\Box_{\{-5;-4,0\}}-12\Psi_2)\dot\vp_4-6\vp_2\dot\Psi_4+\dot{B}_4,
\end{equation}
where $\dot{B}_4=(\dot{B}_0)'$.
Imposing the field equations and using the explicit expression of $\dot{B}_4$,
we then see that {\em both} extreme components, $\dot\vp_0$ and $\dot\vp_4$, decouple if and only if
it holds $\k=\sigma=\k'=\sigma'=\Psi_0=\Psi_1=\Psi_3=\Psi_4=0$, i.e. if and only if the 
background spacetime is of Petrov type D. In that case we have $\dot\vp_0=\mr\Om^{-1}\dot\Psi_0$ 
and $\dot\vp_4=\mr\Om^{-1}\dot\Psi_4$.

Finally, consider the spin weight zero component $\vp_2$. A calculation similar to (\ref{idc0}) leads 
to the following result in an arbitrary spacetime:
\begin{equation}\label{ce2}
 2\;o^{AB}\iota^{CD}\C_{A'(A}\C^{A'E}\vp_{BCD)E}=\Box_{\{-3;0,0\}}\vp_2+B_2,
\end{equation}
where
\begin{align}
\nonumber B_2=&-(\k\tho'_{\C}-\sigma\edt'_{\C}-4\chi+16\Psi_1)\vp_3
  -(\k'\tho_{\C}-\sigma'\edt_{\C}-4\chi'+16\Psi_3)\vp_1\\
  & -8\zeta\vp_2-6(\Psi_0\vp_4+\Psi_4\vp_0).
\end{align}
Linearizing equation (\ref{ce2}), the only interesting case we find is when the background spacetime is type D,
where it holds
\begin{equation}\label{lce2}
 \tfrac{d}{d\ve}|_{\ve=0}\{2\;o^{AB}\iota^{CD}\C_{A'(A}\C^{A'E}\vp_{BCD)E}\}=
  \Box_{\{-3;0,0\}}\dot\vp_2+\dot\Box_{\{-3;0,0\}}\vp_2. 
\end{equation}
Now, it is important to note that, on shell, the left hand side of (\ref{lce2}) {\em does not vanish}, 
but it is equal to $\Box_{\{-3;0,0\}}[\Psi_2\dot\Om^{-1}]$, see appendix \ref{appendix}. 
We then see that, on shell, $\dot\vp_2$ does not decouple, since the terms $\Box_{\{-3;0,0\}}[\Psi_2\dot\Om^{-1}]$
and $\dot\Box_{\{-3;0,0\}}\vp_2$ do not vanish. 
On the other hand, $\Box_{\{-3;0,0\}}[\Psi_2\dot\Om^{-1}]$ cancels one of the terms appearing in the right 
hand side of (\ref{lce2}) (after replacing $\dot\vp_2=\Psi_2\dot\Om^{-1}+\mr\Om^{-1}\dot\Psi_2$), 
so that, on shell, we are left with the equation
\begin{equation}\label{eqlwe}
 \Box_{\{-3;0,0\}}[\mr\Om^{-1}\dot\Psi_2]+\dot\Box_{\{-3;0,0\}}\vp_2=0.
\end{equation}

Summarizing, the conformally covariant identities we find for the case in which the background spacetime 
is type D are:
\begin{align}
 \tfrac{d}{d\ve}|_{\ve=0}\{2\;o^{ABCD}\C_{A'(A}\C^{A'E}\vp_{BCD)E}\} = & (\Box_{\{-1;4,0\}}-18\Psi_2)\dot\vp_0, \label{id0D} \\
 \tfrac{d}{d\ve}|_{\ve=0}\{2\;o^{AB}\iota^{CD}\C_{A'(A}\C^{A'E}\vp_{BCD)E}\} = 
   &\Box_{\{-3;0,0\}}\dot\vp_2+\dot\Box_{\{-3;0,0\}}\vp_2, \label{id2D}\\
 \tfrac{d}{d\ve}|_{\ve=0}\{2\;\iota^{ABCD}\C_{A'(A}\C^{A'E}\vp_{BCD)E}\} = &(\Box_{\{-5;-4,0\}}-18\Psi_2)\dot\vp_4. \label{id4D}
\end{align}
The left hand sides of equations (\ref{id0D}) and (\ref{id4D}) vanish on shell, leaving us with decoupled 
equations for $\dot\vp_0$ and $\dot\vp_4$ respectively, where the normally hyperbolic operator 
$\Box_{\{w;p,q\}}$ (defined in (\ref{ccwo})) has a well defined geometrical meaning in terms of 
conformal and GHP covariance (see section \ref{sec-ccghp}).
The left hand side of (\ref{id2D}) does not vanish on shell, and the discussion between equations (\ref{lce2}) and 
(\ref{eqlwe}) applies.

\subsection{Relation with Teukolsky operators}\label{sec-rt}

In the previous subsection we found conformally covariant identities for fields with spin $1/2$, 1 and 2, 
and we also found the conditions that must satisfy the metrics of the conformal class associated to the 
background spacetimes in order for the field components to satisfy wave-like decoupled equations.
In order to relate these results with the main identity (\ref{mainid0}), we need the relation between 
the conformally covariant operators that we used, and the Teukolsky operators that appear in (\ref{mainid0}).
We have:
\begin{lem}\label{lem-we}
Consider an Einstein spacetime of Petrov type D, and let $\eta$ be a conformal scalar density of type $\{w; p,0\}$.
The relation between the conformally-GHP covariant wave operator $\Box_{\{w;p,0\}}$ and the Teukolsky 
operator $\teuk_{p}$ is given by:
\begin{equation}
 \Box_{\{w;p,0\}}\eta= \Psi_2^{-(w+1)/3}(\teuk_{p}+2\Psi_2+\tfrac{R}{6})(\Psi^{(w+1)/3}_2\eta).
\end{equation}
\end{lem}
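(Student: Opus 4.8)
The plan is to reduce the identity to a computation with the single scalar density $\eta$ and to trade the conformal-weight terms hidden in $\C_a$ for the curvature potential by conjugating with a power of $\Psi_2$. First I would record the scalar form of the connection: by (\ref{slashT}) and (\ref{B}), a density $\eta$ of type $\{w;p,0\}$ satisfies $\C_a\eta=D_a\eta+wf_a\eta$, where $D_a=\T_a+pB_a$ is the Teukolsky connection and, by Remark \ref{remarkA}, $f_a=\Psi_2^{-1/3}\c_a\Psi_2^{1/3}$ on an Einstein type D background, so that $\c_a\Psi_2=3\Psi_2 f_a$. Since $\Psi_2$ has GHP type $\{0,0\}$, a Leibniz computation then gives the intertwining relation
\begin{equation*}
 D_a\big(\Psi_2^{(w+1)/3}\eta\big)=\Psi_2^{(w+1)/3}\big(\C_a\eta+f_a\eta\big),
\end{equation*}
the factor $(w+1)/3$ being chosen so that the $wf_a$ term of $\C_a$ is absorbed up to a single residual $f_a$. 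This is the key device of the proof.

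Next I would apply a second covariant derivative and contract. The delicate point is that $\C_b\eta$ carries a tensor index $b$, so the outer operator in $\Box_{\{w;p,0\}}=g^{ab}\C_a\C_b$ differentiates that index with the Weyl connection $\slashed{\c}_a$, whereas $\teuk_p=g^{ab}D_aD_b$ uses the Levi-Civita connection on it. Writing, for $\mu_b:=\C_b\eta$, the Weyl derivative $\slashed{\c}_a\mu_b=\c_a\mu_b-f_a\mu_b-f_b\mu_a+f^c\mu_c\,g_{ab}$ from (\ref{Qcc}) and tracing with $g^{ab}$ in dimension four (where $g^{ab}g_{ab}=4$), the $Q$-terms collapse to a definite multiple of $f^b\mu_b$. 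In this way $g^{ab}\C_a\C_b\eta$ becomes the Levi-Civita d'Alembertian of the underlying scalar together with terms linear in $f_a$ (in particular a divergence $\c^af_a$) and quadratic in $f_a$ (a norm $f^af_a$), each with an explicit $w$-dependent coefficient.

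Then I would use the intertwining relation to replace the second-order Levi-Civita/GHP part by $\Psi_2^{-(w+1)/3}\teuk_p\big(\Psi_2^{(w+1)/3}\eta\big)$. By construction all of the $w$-dependence is carried by this conjugated Teukolsky operator, which should also reproduce, upon differentiating the powers of $\Psi_2$ twice, precisely the $w$-dependent terms of the direct expansion, so that their cancellation leaves a $w$-independent zeroth-order potential built from $\c^af_a$ and $f^af_a$. To evaluate it I would use the explicit form (\ref{A}) of $f_a$, whose tetrad components are $\rho,\rho',\t,\t'$, so that $f^af_a$ is a quadratic combination of $\rho\rho'$ and $\t\t'$, while $\c^af_a=\Box\log\Psi_2^{1/3}$ is computed from the type D Bianchi identities $\tho\Psi_2=3\rho\Psi_2$, $\edt\Psi_2=3\t\Psi_2$ and their primes together with the GHP Ricci identities. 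The Einstein condition forces the trace-free Ricci to vanish and ties the remaining scalar curvature to $\Lambda$ via $R/6=-g^{ab}P_{ab}$, using (\ref{schouten})--(\ref{CTschouten}); collecting everything should yield exactly $2\Psi_2+R/6$.

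The main obstacle is this last step: pinning down the two numerical coefficients, the $2$ multiplying $\Psi_2$ and the $1/6$ multiplying $R$. This demands careful bookkeeping of the dimension-four trace factors from the $Q$-terms, of the GHP Ricci identity that converts $\c^af_a$ and $\rho\rho'-\t\t'$ into $\Psi_2$ and $\Lambda$, and of the conformal coupling of the weight term, which is the Yamabe-type coefficient $R/6$. The vanishing of all $w$-dependence outside the conjugated $\teuk_p$ provides a stringent internal consistency check at each stage of the computation.
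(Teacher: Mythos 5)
Your proposal is correct and follows essentially the same route as the paper: conjugation by $\Psi_2^{(w+1)/3}$ to absorb the conformal-weight terms of $\C_a$ into the Teukolsky connection, expansion of $\Box_{\{w;p,0\}}=g^{ab}\C_a\C_b$ using the trace of the $Q$-terms in dimension four, and evaluation of the residual $w$-independent potential $\c_af^a+f_af^a=-(2\Psi_2+\tfrac{R}{6})$ via the type D Bianchi/Ricci identities and the Einstein condition. The only difference is organizational --- the paper states the conjugation identity for arbitrary $z$ and compares two independent expansions, while you substitute the intertwining relation directly --- which amounts to the same computation.
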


\begin{proof}
Recall the definition of the Teukolsky derivative (acting on GHP type $\{p,0\}$ quantities), 
$D_a=\T_a+pB_a$ (where $B_a$ is given by (\ref{B})), and the associated wave operator $\teuk_{p}=D^aD_a$. 
Since the spacetime is type D we can express the 1-form (\ref{A}) as (\ref{Apsi2})\footnote{Recall that this way 
one is breaking the explicit conformal covariance of the formalism, see Remark \ref{remarkA}.}, 
then it is easy to show that, for an arbitrary number $z\in\R$, we have
\begin{equation}\label{teuk2}
  \Psi_2^{-z/3}\teuk_{p}(\Psi^{z/3}_2\eta)=\teuk_{p}\eta+2zf^aD_a\eta+z[\c_af^a+zf_af^a]\eta.
\end{equation}
On the other hand, for $\Box_{\{w;p,0\}}\eta$ we find
\begin{equation}
 \Box_{\{w;p,0\}}\eta=\teuk_{p}\eta+2(w+1)f^aD_a\eta+w[\c_af^a+(w+2)f_af^a]\eta.
\end{equation}
Using now (\ref{teuk2}) with $z=w+1$ and the identity $\c_af^a+f_af^a=-(2\Psi_2+\frac{R}{6})$, 
the result follows.
\end{proof}

Using Lemma \ref{lem-we}, we deduce immediately the following identities valid for perturbations of 
an Einstein type D spacetime (where $\l$ is the cosmological constant):\\

\noindent
Spin $\s=1/2$:
\begin{align}
 2 \; o^{B}\C_{A'B}\C^{A'A}\vp_{A}=& \;(\teuk_{+1}-\Psi_2+\tfrac{2}{3}\l)\vp_{0}, \\
 2 \; \iota^{B}\C_{A'B}\C^{A'A}\vp_{A}=& \;\Psi^{1/3}_2(\teuk_{-1}-\Psi_2+\tfrac{2}{3}\l)[\Psi^{-1/3}_2\vp_{1}].
\end{align}

\noindent
Spin $\s=1$:
\begin{align}
 2 \; o^{AB}\C_{A'(A}\C^{A'C}\vp_{B)C}=& \;(\teuk_{+2}-4\Psi_2+\tfrac{2}{3}\l)\vp_{0}, \\
 2 \; o^{A}\iota^{B}\C_{A'(A}\C^{A'C}\vp_{B)C}=& \;\Psi^{1/3}_2(\Box+2\Psi_2+\tfrac{2}{3}\l)[\Psi^{-1/3}_2\vp_{1}], \\
 2 \; \iota^{AB}\C_{A'(A}\C^{A'C}\vp_{B)C}=& \;\Psi^{2/3}_2(\teuk_{-2}-4\Psi_2+\tfrac{2}{3}\l)[\Psi^{-2/3}_2\vp_{2}].
\end{align}

\noindent
Spin $\s=2$:
\begin{align}
 \tfrac{d}{d\ve}|_{\ve=0}\{2\;o^{ABCD}\C_{A'(A}\C^{A'E}\vp_{BCD)E}\} = & \;(\teuk_{+4}-16\Psi_2+\tfrac{2}{3}\l)\dot\vp_0, \\
 \tfrac{d}{d\ve}|_{\ve=0}\{2\;o^{AB}\iota^{CD}\C_{A'(A}\C^{A'E}\vp_{BCD)E}\} = &
    \; \Psi^{2/3}_2(\Box+2\Psi_2+\tfrac{2}{3}\l)[\Psi^{-2/3}_2\dot\vp_2]  +\dot\Box_{\{-3;0,0\}}\vp_2, \label{linFI2} \\
 \tfrac{d}{d\ve}|_{\ve=0}\{2\;\iota^{ABCD}\C_{A'(A}\C^{A'E}\vp_{BCD)E}\} = & 
      \; \Psi^{4/3}_2(\teuk_{-4}-16\Psi_2+\tfrac{2}{3}\l)[\Psi^{-4/3}_2\dot\vp_4].
\end{align}

Finally, it is straightforward to show that
\begin{equation}\label{lhs-mi}
  \C_{A'_1(A_1}\C^{A'_1B}\vp_{A_2...A_n)B} = (\c_{A'_1(A_1}-nf_{A'_1(A_1})\c^{A'_1B}\vp_{A_2...A_n)B},
\end{equation}
thus we recover the main identity (\ref{mainid0}) (recall that by Remark \ref{remarkA} we have 
$f_a=A_a$ in type D).
Note that the formulation of section \ref{sec-cci} is actually more general since it deals with all 
algebraically special spacetimes, not just the type D.

\subsection{The conformally covariant Laplace-de Rham operator}\label{sec-laplace}

In the case of integer spin, the spinorial operator (\ref{mainspinorop}) admits a description in tensor terms. 
Since in curved spacetimes the only interesting cases of integer spin fields correspond to spin 1 and 2, we 
now briefly discuss the tensorial structure of the operator in these cases.

For 4-dimensional spacetimes, it was found in \cite{Araneda2017} that the tensor structure 
of the spinorial operator on the right hand side of (\ref{lhs-mi}) is that of a ``modified'' Laplace-de Rham 
operator acting on tensor valued differential forms. The same idea applies here, with the only difference 
that now the operator adopts a 
more `symmetric' form in terms of a covariant exterior derivative associated to $\C_{a}$. 
More precisely, for a tensor valued differential form (with well defined conformal weight)
$\w_{a_1...a_k b_1...b_l}=\w_{a_1...a_k [b_1...b_l]}$, we define 
the covariant exterior derivative $\mc{D}_{\C}$ and its adjoint $\mc{D}^{\dag}_{\C}$ by
\begin{align}
 & (\mc{D}_{\C}\w)_{a_1...a_k b_1...b_{l+1}}:=(l+1)\C_{[b_1}\w_{|a_1...a_k |b_2...b_{l+1}]}, \label{ced1} \\
 & (\mc{D}^{\dag}_{\C}\w)_{a_1...a_k b_1...b_{l-1}}:=-\C^{c}\w_{a_1...a_k c b_1...b_{l-1}}. \label{ced2}
\end{align}
Let $F_{ab}$ and $K_{abcd}$ be the tensorial analogues of the totally symmetric spinors $\vp_{AB}$ 
and $\vp_{ABCD}$, i.e. 
\begin{align}
 & F_{ab}=\vp_{AB}\bar\e_{A'B'}+\bar\vp_{A'B'}\e_{AB}, \\
 & K_{abcd}=\vp_{ABCD}\bar\e_{A'B'}\bar\e_{C'D'}+\bar\vp_{A'B'C'D'}\e_{AB}\e_{CD}
\end{align}
(where we are considering $K_{abcd}=K_{ab[cd]}$ as a tensor valued 2-form with the extra tensorial indices $ab$). 
$F_{ab}$ has conformal weight $w=0$, and $K_{abcd}$ has $w=1$.
%It is then tedious but straightforward to show that
Then:
\begin{lem}
With the definitions (\ref{ced1}) and (\ref{ced2}), we have the following identities:
\begin{align}
  -\tfrac{1}{2}[(\mc{D}_{\C}\mc{D}^{\dag}_{\C}+\mc{D}^{\dag}_{\C}\mc{D}_{\C})F]_{ab}
 = & \C_{E'(A}\C^{E'E}\vp_{B)E}\bar\e_{A'B'}+\C_{E(A'}\C^{EE'}\bar\vp_{B')E'}\e_{AB}, \\
\nonumber -\tfrac{1}{2}[(\mc{D}_{\C}\mc{D}^{\dag}_{\C}+\mc{D}^{\dag}_{\C}\mc{D}_{\C})K]_{abcd} = &
 \C_{E'(C}\C^{E'E}\vp_{|ABE|D)}\bar\e_{A'B'}\bar\e_{C'D'}\\
 & +\C_{E(C'}\C^{EE'}\bar\vp_{|A'B'E'|D')}\e_{AB}\e_{CD}.
\end{align}
\end{lem}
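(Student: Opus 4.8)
The plan is to translate both sides into two-component spinor language and verify the operator identity by direct expansion, organizing the computation around the (anti-)self-dual splitting so that the two chiral blocks decouple. Since $F_{ab}$ has conformal weight $0$ and its anti-self-dual part is the weight-$(-1)$ spinor $\vp_{AB}$ (while $\bar\vp_{A'B'}\e_{AB}$ is the self-dual part), and since the Laplace--de Rham operator $\mc{D}_\C\mc{D}^\dag_\C+\mc{D}^\dag_\C\mc{D}_\C$ built from a torsion-free connection preserves duality type, it suffices to establish the identity on the anti-self-dual block $\vp_{AB}\bar\e_{A'B'}$; the self-dual block then follows by complex conjugation, which accounts for the two terms on the right-hand side. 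The right-hand side is precisely the operator (\ref{mainspinorop}) for $n=2$, so the content of the lemma is that the tensor Laplace--de Rham operator reproduces this symmetric spinor operator with coefficient $-\tfrac12$.

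Concretely, first I would compute $\mc{D}^\dag_\C F$ from (\ref{ced2}): writing $\C^{c}=\C^{CC'}$ and contracting the $\bar\e_{C'A'}$ factor, the divergence of the anti-self-dual block is the weight-$(-1)$ one-form $\xi_{BB'}=\C^{A}{}_{B'}\vp_{AB}$. The key four-dimensional fact is that the exterior derivative $\mc{D}_\C F$ (a three-form) and the codifferential of an anti-self-dual two-form both reduce to this single spinor divergence $\C^{A}{}_{A'}\vp_{AB}$, up to a factor; this is what makes the two compositions add constructively rather than producing a commutator. Re-applying $\mc{D}_\C$ to $\xi$ via (\ref{ced1}) and extracting the anti-self-dual part of the resulting two-form yields the symmetrized, primed-contracted double derivative $\C_{E'(A}\C^{E'E}\vp_{B)E}$, and the analogous treatment of $\mc{D}^\dag_\C\mc{D}_\C F$ yields the same object. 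Adding the two and tracking the numerical factors coming from the antisymmetrizations in (\ref{ced1}) and the contractions in (\ref{ced2}) produces the overall $-\tfrac12$.

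The identity for $K_{abcd}$ then requires no genuinely new computation: the operators $\mc{D}_\C$ and $\mc{D}^\dag_\C$ act only on the two form indices $cd$, while the tensorial indices $ab$ (spinorially, $AB$ together with $A'B'$) ride along as spectators. Thus the entire $F$-computation applies verbatim with the form-index spinor pair $CD,C'D'$ in place of $AB,A'B'$ and with $\vp_{CD}$ replaced by $\vp_{ABCD}$ (the indices $A,B$ held fixed), producing exactly $\C_{E'(C}\C^{E'E}\vp_{|ABE|D)}\bar\e_{A'B'}\bar\e_{C'D'}$ together with its complex conjugate. I expect the main obstacle to be the bookkeeping of signs and numerical factors in the chiral splitting — in particular, verifying precisely that the ``box'' (trace) contributions and the curvature terms that would arise from commuting the two covariant derivatives cancel between $\mc{D}_\C\mc{D}^\dag_\C$ and $\mc{D}^\dag_\C\mc{D}_\C$, so that the clean symmetric operator (\ref{mainspinorop}) survives with no zeroth-order remainder. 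This cancellation is forced by the duality structure (both the exterior derivative and the codifferential of an anti-self-dual two-form reduce to the same divergence $\C^{A}{}_{A'}\vp_{AB}$), but confirming it term by term, including the conformal-GHP weight contributions hidden inside $\C_\C$, is the delicate step.
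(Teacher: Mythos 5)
Your overall strategy is sound and is essentially the only way to verify this identity directly: split $F$ into its anti-self-dual and self-dual blocks, note that both $\mc{D}_{\C}F^-$ and $\mc{D}^{\dag}_{\C}F^-$ reduce to the single divergence $\xi_{BB'}=-\C^{A}{}_{B'}\vp_{AB}$, and recover the symmetric operator (\ref{mainspinorop}) as (twice) the anti-self-dual projection of $\mc{D}_{\C}\xi$, which indeed produces the factor $-\tfrac{1}{2}$; the extension to $K_{abcd}$ with $ab$ as spectator indices is also fine. (The paper itself states the Lemma without proof, deferring to the analogous computation in \cite{Araneda2017}, so there is no in-paper argument to compare against.) However, two steps in your proposal need repair. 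The first is genuinely wrong as stated: the self-dual block does \emph{not} "follow by complex conjugation." The 1-form $f_a$ in (\ref{A}) is complex and the connection is chiral --- the paper stresses, right after (\ref{prime}), that the weighted derivative does not commute with complex conjugation. Conjugating your anti-self-dual identity yields an identity for the \emph{conjugate} connection (built from $\bar{f}_a$, with $\wt{W}$ replaced by $\bar{W}$ and the $p,q$ weights exchanged), whereas the second term on the right-hand side of the Lemma involves $\C$ itself acting on $\bar\vp_{A'B'}$. The correct repair is to run the identical algebra directly on primed indices, which works because $\C_a$ acts on primed spinors through $\wt{W}_{aC'}{}^{B'}=f_{AC'}\bar\e_{A'}{}^{B'}$, i.e.\ with the \emph{same} complex $f_a$.

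The second issue is the fact you flag as "delicate" but never establish, and it is the foundation of everything else you use: with the paper's weight conventions ($\e_{AB}$, $\bar\e_{A'B'}$ of conformal weight $1$ and GHP type $\{0,0\}$) one has
\begin{equation*}
 \C_a\e_{BC}=0, \qquad \C_a\bar\e_{B'C'}=0,
\end{equation*}
hence $\C_a g_{bc}=0$ and the volume form is $\C$-parallel. This is a one-line consequence of (\ref{Wcc}) and (\ref{ccd}): the Weyl-connection contribution gives $\slashed{\c}_a\e_{BC}=-f_a\e_{BC}$, which is exactly cancelled by the weight term $+f_a\e_{BC}$ (and identically for $\bar\e_{B'C'}$, again because the same $f_a$ appears in $\wt{W}$). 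Without this statement, your claims that $\mc{D}^{\dag}_{\C}=\pm *\mc{D}_{\C}*$, that the Laplace--de Rham operator commutes with the Hodge star and so preserves duality type, and that the $\bar\e_{A'B'}$ factors pass through derivatives, all hang in the air; with it, the whole Levi-Civita-style spinor computation goes through verbatim for $\C_a$, and no curvature or "remainder" terms can arise since the derivation never commutes two covariant derivatives --- it is purely the algebraic decomposition of antisymmetrized derivatives. Also, "torsion-freeness" is not what makes the Laplacian preserve duality; it is precisely this parallelism of the weighted $\e$-spinors.
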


The 4-dimensional ``spin $\s$ modified'' Laplace-de Rham operator defined in \cite{Araneda2017} 
is then the conformally covariant Laplace-de Rham operator
\begin{equation}
 \mc{D}_{\C}\mc{D}^{\dag}_{\C}+\mc{D}^{\dag}_{\C}\mc{D}_{\C}
\end{equation}
acting on tensor valued differential forms with well defined conformal weight.

\section{Conclusions}

In this work we have developed a conformally and GHP covariant formalism for dealing with the massless 
free field equations (\ref{feq}) in (4-dimensional) algebraically special Einstein spacetimes, 
and we have shown that the operators associated to well known equations in the literature find a natural 
geometrical interpretation in this framework.
The main tool of the construction is the almost-complex structure (\ref{acs}) and its {\em unique} associated 
Weyl connection (\ref{uwc}) for the conformal manifold. 
Algebraically special spacetimes have preferred null directions on the geometry, and after adapting a null frame 
and its `gauge symmetry' to them we showed that the covariant derivative naturally induced (from the 
conformal structure) in associated vector bundles is precisely the Teukolsky connection, see 
Theorem \ref{thm-teuk}.
Furthermore, the almost-complex structure is integrable in type D spaces and then they become complex 
Hermitian manifolds.

A natural interpretation of the `ordinary' and `generalized' hidden symmetries (given respectively by 
(\ref{kseq}) and (\ref{mks})) persistent on black hole perturbations has also emerged from our formalism, 
since, on the one hand, by Lemma \ref{lem-kahler} the existence of a conformal Killing-Yano tensor 
(or its associated Killing spinor) in type D spacetimes can be thought to be a consequence of the 
presence of a K\"ahler metric in the conformal Hermitian class, and more generally, 
the Killing spinor equations are just the reflection of the type D principal spinors being parallel with 
respect to the natural Weyl-GHP connection, which in turn (by Lemma \ref{lem-ps}) is a consequence 
of algebraic speciality. 
In this sense the spin lowering/raising mechanism derivable from the identities given in section \ref{sec-cci} 
is closer in spirit to the original (and simpler) Penrose's work \cite{Penrose1965}, which uses covariantly 
constant spinors in Minkowski.

On the other hand, conformal spinor geometry is particularly well suited for the local twistor formalism, 
and we have shown that the preferred Weyl connection leads to a natural local twistor exact sequence (\ref{es2})
in the reversed order to the standard one, which results in the construction of {\em weighted} local twistors.
We introduced a connection on the weighted twistor bundle, and showed that the `weighted Killing spinor 
notion' deduced from it coincides with our earlier definition.

Finally, we mention that the generalized Teukolsky connection and the closely related `weighted Killing fields' 
found in \cite{Araneda2017} for perturbations of higher dimensional spacetimes, can be shown to
follow the same principle as the one exploited here, namely conformal and GHP covariance. 
However, the question about conformal invariance of field equations 
in higher dimensions is much more subtle than in the 4-dimensional case (in particular, Maxwell fields 
in $d\neq4$ are not conformally invariant).

\section*{Acknowledgements}

I would like to thank Gustavo Dotti, Martin Reiris and Oscar Reula for very helpful discussions. 
This work is supported by CONICET (Argentina).

\appendix

\section{Some details of calculations for spin 2}\label{appendix}

In this appendix we give some details of the calculations performed in section \ref{sec-spin2} 
corresponding to spin $\s=2$. We will use the notation and conventions of that section.

We are studying equations for the spin 2 field (\ref{rws}) in the 
context of gravitational perturbations. The field equations in this case are not $\c^{A'A}\vp_{ABCD}=0$, 
but the linearized Bianchi identities $\frac{d}{d\ve}|_{\ve=0}(\c^{A'A}\Psi_{ABCD})=0$. 
Replacing $\vp_{ABCD}=\Om^{-1}\Psi_{ABCD}$, we have:
\begin{align}
\nonumber \C_{A'A}\C^{A'E}\vp_{BCDE}=&\Om^{-1}(\c_{A'A}-4f_{A'A})\c^{A'E}\Psi_{BCDE}
 +\Psi_{BCDE}(\c_{A'A}-4f_{A'A})\c^{A'E}\Om^{-1} \\
& +(\c^{A'E}\Om^{-1})(\c_{A'A}\Psi_{BCDE})+(\c_{A'A}\Om^{-1})(\c^{A'E}\Psi_{BCDE}). \label{aux1}
\end{align}
Now let $\l^{ABCD}$ be some of the spinors in the set $\{o^{ABCD}, o^{(A}o^B\iota^C\iota^{D)}, \iota^{ABCD}\}$.
Projecting (\ref{aux1}) over $\l^{ABCD}$, linearizing, and taking into account the background 
assumptions $(\c^{A'A}\Psi_{ABCD})|_{\ve=0}=0$ and $(\c_{a}\Om)|_{\ve=0}=0$ 
(since $\Om(0)=\mr\Om=\rm{const.}$), we get 
\begin{align}
\nonumber \tfrac{d}{d\ve}|_{\ve=0} \{ \l^{ABCD}\C_{A'A}\C^{A'E}\vp_{BCDE} \}
  = & \mr\Om^{-1}\tfrac{d}{d\ve}|_{\ve=0} \{ \l^{ABCD}(\c_{A'A}-4f_{A'A})\c^{A'E}\Psi_{BCDE} \} \\
\nonumber & +\tfrac{d}{d\ve}|_{\ve=0} \{ \l^{ABCD}\Psi_{BCDE}(\c_{A'A}-4f_{A'A})\c^{A'E}\Om^{-1} \} \\
& +\tfrac{d}{d\ve}|_{\ve=0} \{ \l^{ABCD}(\c^{A'E}\Om^{-1})(\c_{A'A}\Psi_{BCDE}) \}. \label{aux2}
\end{align}

For $\l^{ABCD}=o^{ABCD}$, it is straightforward to see that, if the background spacetime is 
algebraically special (so that $\k(0)=\sigma(0)=\Psi_0(0)=\Psi_1(0)=0$)
the second and third lines in the right hand side of (\ref{aux2}) vanish, thus the left hand side 
of the identity (\ref{lce0}) vanishes on shell and we get the result mentioned in that section.

For $\l^{ABCD}=o^{(A}o^B\iota^C\iota^{D)}$, a tedious calculation assuming that the background spacetime 
is type D shows that (\ref{aux2}) becomes
\begin{align*}
\tfrac{d}{d\ve}|_{\ve=0} \{ o^{(A}o^B\iota^C\iota^{D)}\C_{A'A}\C^{A'E}\vp_{BCDE} \}
  = & \mr\Om^{-1}\tfrac{d}{d\ve}|_{\ve=0} \{ o^{(A}o^B\iota^C\iota^{D)}(\c_{A'A}-4f_{A'A})\c^{A'E}\Psi_{BCDE} \} \\
 & +\tfrac{1}{2}\Psi_2(\Box+2f^a\c_a)\dot\Om^{-1}.
\end{align*}
Now, using the definition of the conformal wave operator $\Box_{\{w;p,q\}}$ and some identities 
of the background (type D) spacetime, it is not difficult to show that
\begin{equation}
 \Psi_2(\Box+2f^a\c_a)\dot\Om^{-1}=\Box_{\{-3;0,0\}}[\Psi_2\dot\Om^{-1}],
\end{equation}
then the discussion between equations (\ref{lce2}) and (\ref{eqlwe}) follows.

\end{document}